\newcommand{\beq}{\begin{equation}}
\newcommand{\eeq}{\end{equation}}
\newcommand{\bmat}{\begin{pmatrix}}
\newcommand{\emat}{\end{pmatrix}}
\newcommand{\pdim}{p}
\newcommand{\kdim}{K}
\newcommand{\qdim}{Q}
\newcommand{\lam}{\lambda}
\newcommand{\A}{\bo A}
\newcommand{\Nabla}{\bm {\nabla}}
\newcommand{\B}{\mathbf{B}}
\newcommand{\I}{\bo I}
\renewcommand{\(}{\left(}
\renewcommand{\)}{\right)}
\newcommand{\ndim}{n}
\newcommand{\hop}{\mathrm{T}}
\newcommand{\tr}{\mathrm{Tr}}     % Matrix trace 
\newcommand{\bom}[1]{\boldsymbol{#1}}    % boldface math (for greek letters)
\newcommand{\bo}[1]{\mathbf{#1}}              % boldface math 
\newcommand{\be}{\beta}
\newcommand{\mV}{\mathcal V} 
\newcommand{\PDH}{\mathcal{S}}   % Class of PDH matrices
\newcommand{\Tr}{\mathrm{Tr}}
\newcommand{\paino}{\sl} 
\newcommand{\M}{\bom \Sigma}
\newcommand{\R}{\mathbb{R}}
\newcommand{\Z}{\bo X}
\newcommand{\z}{\bo x}
\newcommand{\Q}{\M}
\newcommand{\D}{\bo D}
\newcommand{\x}{\bo x}
\newcommand{\dI}{d_{{\rm KL}}}
\newcommand{\dE}{d_{{\rm E}}}
\newcommand{\dR}{d_{{\rm R}}}
\newtheorem{condition}{Condition}
\theoremstyle{remark}
\newtheorem{remark}{Remark}
\newtheorem{lemma}{Lemma}
\newtheorem{theorem}{Theorem}
\newtheorem{proposition}{Proposition}
\newcommand{\norm}[1]{\left\lVert#1\right\rVert}
\renewcommand{\S}{d'} %{\mathbf{S}}
\newcommand{\C}{\mathbf{C}}
\newcommand{\U}{\mathbf{U}}
\newcommand{\bmu}{\bom \mu}
\journal{Journal of \LaTeX\ Templates}
\begin{document}

\begin{frontmatter}

%\title{Elsevier \LaTeX\ template\tnoteref{mytitlenote}}
%\tnotetext[mytitlenote]{Fully documented templates are available in the elsarticle package on \href{http://www.ctan.org/tex-archive/macros/latex/contrib/elsarticle}{CTAN}.}

\title{Simultaneous penalized M-estimation
of covariance matrices using geodesically convex optimization}%and an application to regularized discriminant analysis}
%% Group authors per affiliation:

%\author{Ilya Soloveychik \fnref{myfootnote}}
% \fntext[myfootnote]{Since 1880.}
 %\address{Aalto University, Finland}

%% or include affiliations in footnotes:
%\author[mymainaddress,mysecondaryaddress]{Elsevier Inc}
%\ead[url]{www.elsevier.com}

\author[address1]{Esa Ollila\corref{mycorrespondingauthor}}
\cortext[mycorrespondingauthor]{Corresponding author}
\ead{esa.ollila@aalto.fi}
\address[address1]{Aalto University, Finland}
\author[address2]{ Ilya Soloveychik}
\address[address2]{The Hebrew University of Jerusalem, Israel }
\author[address3]{David E. Tyler\fnref{myfootnote1}}
\address[address3]{Rutgers -- The State University of New Jersey, USA}
 \fntext[myfootnote1]{Research partially supported by the National Science Foundation Grant No.\  DMS-1407751. Any Opinions, findings and conclusions or recommendations expressed in this material are those of the author(s) and do not necessarily reflect those of the National Science Foundation.}
 \author[address2]{Ami Wiesel\fnref{myfootnote2}}
 \fntext[myfootnote2]{The author gratefully acknowledge Israel Science Foundation grant 1339/15}

\begin{abstract}
A common assumption when sampling $p$-dimensional observations from $K$ distinct group is the equality of the covariance matrices.  In this paper, we propose two penalized $M$-estimation approaches for the estimation of the covariance or scatter matrices under the broader assumption that they may simply be close to each other, and hence roughly deviate from some positive definite ``center''.  The first approach begins by generating a pooled $M$-estimator of scatter based on all the data, followed by a penalised $M$-estimator of scatter for each group, with the penalty term chosen so that the individual scatter matrices are shrunk towards the pooled scatter matrix. In the second approach, we minimize the sum of the individual group $M$-estimation cost functions together with an additive joint penalty term which enforces some similarity between the individual scatter estimators, i.e. shrinkage towards a mutual center.  In both approaches, we utilize the concept of geodesic convexity to prove the existence and uniqueness of the penalized solution under general conditions.  We consider three specific penalty functions based on the Euclidean, the Riemannian, and the Kullback-Leibler distances. In the second approach, the distance based penalties are shown to lead to estimators of the mutual center that are related to the arithmetic, the Riemannian and the harmonic means of positive definite matrices, respectively. A penalty based on an ellipticity measure is also considered which is particularly useful for shape matrix estimators. Fixed point equations are derived for each penalty function and the benefits of the estimators are illustrated in regularized discriminant analysis problem. 
\end{abstract}

\begin{keyword}
discriminant analysis\sep geodesic convexity \sep  $M$-estimators of scatter matrix \sep   shrinkage \sep  regularization
%\texttt{elsarticle.cls}\sep \LaTeX\sep Elsevier \sep template
%\MSC[2010] 00-01\sep  99-00
\end{keyword}

\end{frontmatter}

%\linenumbers

\section{Introduction}
\label{sec:intro}

Many multivariate statistical applications require the simultaneous estimation of the covariance matrices $\bm\Sigma_1,\dots,\bm\Sigma_K$ of a set of multivariate measurements 
on $K$ distinct groups.  Often the sample sizes $n_k, k = 1, \dots, K$, of each group are small relative to the dimension $p$, which makes estimating the individual 
covariance matrices a challenge. Quite often, though, based on the physical properties of the underlying measured phenomena or experience with similar datasets, one may
postulate the existence of common features or similarities among the estimated covariance matrices. This prior knowledge can be incorporated into the estimation problem by 
either modeling the covariance matrices as having some common structure or by pooling the data from the $K$ groups.

In this paper we focus on data pooling techniques via regularization. The use of pooling and regularization methods assume the distinct covariance matrices share some common features, without necessarily modeling the common features.  A prominent example of this approach is Friedman's regularized discriminant analysis \citep{friedman1989regularized}.
A similar approach to estimating precision matrices, i.e.\ inverse covariance matrices, was treated in \cite{lee2015joint}.
The goal of Friedman's regularized discriminant analysis approach is to strike a balance between quadratic and linear discriminant analysis (QDA/LDA) in the under-sampled 
scenario via shrinkage regularization. In \cite{friedman1989regularized} it was illustrated that it is often beneficial to shrink the class Sample Covariance Matrices (SCM) 
towards the pooled SCM.

The methods proposed in these works were developed under the assumption of sampling from multivariate normal distributions. Consequently, they
tend to depend on variants of the SCM estimator and are not resistant to outliers nor robust against heavier tailed distributions. From this perspective, taking into 
account the non-Gaussianity of measurements in many real world applications, the statistical community has become increasingly aware of the advantage of more robust and 
resistant multivariate methods. This, in particular, led to development of the family of the $M$-estimators of multivariate scatter \citep{huber1981book, maronna1976robust, 
tyler1987distribution}, as well as families of high-breakdown point scatter estimators such as the MVE and MCD estimator \citep{rousseeuw1985multivariate}, 
the $S$-estimators \citep{davies1987asymptotic}, and the $MM$-estimators \citep{tatsuoka2000uniqueness}, among others.  There appears, though, to be little work on robustness in the context of 
joint covariance estimation and its application to regularized discriminant analysis and other problems. The intent of this paper is to address this issue. 
We focus on $M$-estimation methods, which unlike the high breakdown point methods, are readily amenable to the sparse data setting and regularization.

Our aim is to propose robust versions of the SCM based shrinkage covariance estimators proposed in \cite{friedman1989regularized} for regularized 
discriminant analysis (RDA) in the sparse data setting. The approach used in \cite{friedman1989regularized} is based on taking a convex combination
of the individual SCM and the pooled SCM. Such an approach, though, does not directly generalize when using $M$-estimators of scatter, since
the $M$-estimators are not defined when the data within a group is sparse. 
Rather, in our approach we apply penalization to $M$-estimation loss functions. When using such loss functions which correspond to bounded influence $M$-estimators of scatter,
though, one encounters a non-convex optimization problem in Euclidean space. Here, the concept of geodesic convexity ($g$-convexity) plays a crucial role, which basically means switching to a different metric over the set of positive definite matrices, for which the loss function is then convex in this metric. The use of $g$-convexity in covariance estimation
was introduced in \cite{wiesel2012geodesic} and has subsequently been utilized in related works, e.g. \cite{wiesel2012unified,zhang2013multivariate,sra2015conic,ollila2014regularized}. See \cite{wiesel2015structured} for a nice overview of usage of $g$-convexity in covariance matrix estimation problems. 
Introducing additive $g$-convex penalty terms to the loss functions, keeps the optimization problem $g$-convex.

Two penalized $M$-estimation approaches are introduced for the problem of joint estimation of group covariance matrices. The first approach begins by defining 
a pooled $M$-estimator of scatter based on all the data, followed by a penalized $M$-estimator of scatter for each group, with the penalty term chosen so that 
the individual scatter matrices are shrunk towards the pooled scatter matrix.  In the second approach, we minimize the sum of the individual group $M$-estimation 
loss functions together with an additive joint penalty term which enforces some similarity between the individual scatter estimators, i.e.\ shrinkage towards a 
mutual center. Hence, in the second approach, the individual covariance matrices and their mutual center are estimated simultaneously. In both approaches, we 
consider three $g$-convex penalty functions based on the Euclidean, the Riemannian, and the information theoretic 
(Kullback-Leibler) distances. In the second approach, these penalties are shown to lead to estimators of the mutual center 
that are related to the arithmetic, the Riemannian and the harmonic means of positive definite matrices, respectively. We also consider a penalty based on
an ellipticity measure for positive definite matrices, which shrinks the individual estimators towards a common shape matrix rather than a common scatter
matrix. 

The rest of the paper is organized as follows.  Section~\ref{el_pop_sec} introduces our penalized  $M$-estimation approaches for estimating the unknown $\kdim$ 
scatter matrices $\{\Q_k\}_{k=1}^\kdim$ and their joint center $\Q$. Examples of  $g$-convex loss functions, including the Gaussian, Huber's and Tyler's loss functions, are given.  Section~\ref{sec:gconv} provides a brief introduction  to $g$-convex functions of positive definite symmetric (PDS) matrices. In Section~\ref{sec:pencost} 
examples of $g$-convex penalty/distance functions are given. In addition, we show that the KL-distance and the ellipticity distance are $g$-convex, and when used for
defining a center for a given $\{\M_k\}_{k=1}^\kdim$ yield weighted harmonic means of positive definite matrices. 
In Section~\ref{sec:uniq} we derive general conditions for uniqueness of the solution as well as derive fixed point algorithms for their computation. Section~\ref{sec:tyl} considers existence and uniqueness conditions separately for Tyler's loss function. Section~\ref{sec:CV} describes a cross validation procedure for penalty parameter selection. In Section~\ref{sec:RDA} we illustrate the application of the proposed scatter matrix estimators to
regularized discriminant analysis   
%Finally,  Section~\ref{sec:simul} 
and illustrate the performance of  RDA rules via a small simulation study and a data example. 
%Our examples illustrates the usefulness and importance of different penalty functions. 
Section~\ref{sec:concl} concludes the paper.  Proofs are given in the Appendix.

%\subsection{Notation}
{\it Notation:} Let $\mathcal{S}(p)$ be the open cone of positive definite $p \times p$ symmetric matrices, and let $\I$ be the identity matrix of proper dimension. 
On $\mathcal{S}(p)$, we denote the Frobenius norm by $\norm{\cdot}_{\rm F}$, the spectral norm by $\norm{\cdot}_2$, and the determinant by $|\cdot|$.

%%%%%%%%%%%%%%%%%%%%%%%%%%%%%%
%%%%%%%%%%%%%%%%%%%%%%%%%%%%%%%

\section{Problem Formulation} \label{el_pop_sec}

\subsection{General Setting} \label{gen_set}

%In this paper we model centered heavy-tailed or outlier-contaminated samples as elliptically distributed; 
%see  
The multivariate $M$-estimators were introduced in \cite{maronna1976robust} as generalizations of the maximum likelihood estimators for an elliptically symmetric
multivariate distribution. An absolutely continuous random vector $\z \in \R^\pdim$ is said to have a real elliptically symmetric (RES) distribution with center of
symmetry $\bom \mu$ and scatter matrix parameter $\Q \in \PDH(\pdim)$, if it has a density of the form
\begin{equation}
f(\x | \Q) =   C_{\pdim,g} |\Q|^{-1/2} g\{(\x - \bom \mu)^\top\Q^{-1}(\x-\bom \mu)\},  %\frac{1}{\sqrt{|\Q|}}g(\x^\top\Q^{-1}\x).
\end{equation}
where $C_{\pdim,g}$ denotes the normalizing constant, and $g: \R^+ \to \R^+$ is viewed as a density generator. Here, $\R^+ = \{ x \in \R | x \ge 0 \}$. For simplicity, we state
$\x \sim \mathcal{E}_\pdim(\bom \mu,\Q,g)$.  The function $g$ determines the radial distribution of the elliptical population and hence the degree of its ``heavy-tailedness''.  
The scatter matrix $\Q$ is proportional to the covariance matrix whenever the second moments exist, and serves as a
generalization of the covariance matrix when the second moments do not exist. There is a extensive literature on the properties of elliptical distributions. The
elliptical family includes many widely used multivariate distributions such as as Gaussian, compound Gaussian, $K$-distributions, among many others. 
For a thorough treatment of elliptical distributions and their generalizations see e.g., \cite{frahm2004generalized,ollila2012complex}. 

Consider samples from $K$ distinct groups of $p$-dimensional measurements, 
\begin{equation}
%\x_1^1,\dots,\x_{n_1}^1,\quad\dots,\quad\x_1^k,\dots,\x_{n_K}^k,
\x_{11},\dots,\x_{1n_1},\quad\dots,\quad\x_{K1},\dots,\x_{\kdim n_K},
\end{equation}
with group $\Z_k = \{\x_{k 1},\dots,\x_{k n_k}\}$ % \{\x_1^k,\dots,\x_{n_k}^k\}$
have sample size $n_k$, $k = 1, \ldots, K$. Let
 \beq \label{eq:pi_k} 
 N=\sum_{i=1}^\kdim n_k \quad \mbox{and} \quad \pi_k = \frac{n_k}{N}, ~ \mbox{for} ~ k = 1, \ldots, K    
\eeq
denote the total sample size and the relative sample sizes of each of the $K$ groups, respectively.  The measurements are assumed to be  mutually independent and within each group they are assumed to be identically distributed. 

In our development, we first presume the measurements within the different groups follow elliptical distributions with known centers of symmetry, which we take without loss
of generality to be $\bom \mu_k = \bo 0$ for $k = 1, \ldots, K$. The assumption of having known centers is to be discussed later. Hence, we assume the random sample of
the measurements for the $k$th group comes from an $\mathcal E_\pdim(\bo 0,\M_k,g_k)$ distribution, $k = 1, \ldots, K$, with possibly different scatter matrices $\Q_k$.
The negative log-likelihood for this scenario, ignoring the normalizing constant $C_{\pdim,g}$, is proportional to
\beq 
\label{eq:likelihood} 
\mathcal L(\Q_1, \ldots, \Q_k) =  \sum_{k=1}^K \pi_k \mathcal L_k (\Q_k),
\eeq 
where 
\begin{align} \label{eq:Lk} 
\mathcal L_k (\Q_k) %&= \frac{1}{n_k} \sum_{i=1}^{n_k} \mathcal{L}_k(\Q;\z_i^k)  \notag  \\
&=  \frac{1}{n_k}\sum_{i=1}^{n_k} \rho_k( \x^\hop_{ki} \Q^{-1}_k \x_{ki} ) - \log |\Q^{-1}_k|,
\end{align} 
and $\rho_k(t) = -2 \log g_k(t)$. The nature of $M$-estimation is to then divorce the estimators obtained from minimizing \eqref{eq:likelihood}
from the distributions that generated the negative log-likelihood function. When using the sample covariance matrix, for example, one does need to assume 
it is based on a sample from a multivariate normal distribution or even from an elliptical distribution. In general, for respective loss functions 
$\rho_k: \R^+ \to \R^+$, not necessarily related to any $g_k$, a minimizer \eqref{eq:likelihood} represents an $M$-estimator of scatter. For more detail discussions 
on the concepts underlying $M$-estimation and other robust methods, see \cite{huber1981book,hampel1986book,maronna2006book}. 

Minimizing \eqref{eq:likelihood} over $\Q_1, \ldots, \Q_K \in \PDH(\pdim)$ is equivalent to minimizing \eqref{eq:Lk} individually over $\Q_k \in \PDH(\pdim)$
for $k = 1, \dots, K$, i.e.\ obtaining the individual $M$-estimators of scatter for each group. One drawback to this approach is that the individual $M$-estimators of
scatter do not exist when $n_k < \pdim$ \citep{kent1991redescending}, and do not differ substantially from the sample covariance matrix when $n_k$ is only slightly larger than $\pdim$.
Consequently, for sparse group data, we need to pool the information in the different groups and hence presume that the scatter matrices are somewhat similar across the
groups. The most extreme and most common assumption is that the scatter matrices are equal across groups. Here, though, we make no strong model assumptions regarding
the different scatter matrices, but rather propose the following two penalization approaches.

\emph{Proposal 1: Regularization towards a pooled scatter matrix.}  A pooled $M$-estimator of scatter, obtained by pooling together the data from each of the $K$ groups, can be defined as a minimum of 
\beq 
\label{eq:Mpooled} 
\mathcal L(\Q) =  \sum_{k=1}^K \pi_k \mathcal L_k (\Q) = \frac{1}{N} \left\{\sum_{k=1}^K \sum_{i=1}^{n_k} \rho_k( \x^\hop_{ki} \Q^{-1} \x_{ki} ) \right\} - \log |\Q^{-1}|.
\eeq 
over $\Q \in \PDH(\pdim)$.  Penalized $M$-estimators of scatter for the individual groups can then be defined as a solution to the optimization problem
\beq 
\label{eq:kpenfun} 
\min_{ \Q_k \in \PDH(\pdim) } \left\{\mathcal L_k (\Q_k)+ \lambda \, d(\Q_k,\hat \Q)  \right\}, \quad k = 1, \ldots K,
\eeq 
where $\hat \Q$ is minimizer of \eqref{eq:Mpooled}, $d(\Q_k,\Q)$ represents a penalty based on distances between
$\Q_k$ and $\Q$, and $\lambda$ is positive tuning parameter, chosen by the user, which balance the interplay 
between unrestricted $M$-estimation of scatter and shrinkage towards $\Q$. 
Equivalently, we can write optimization program in \eqref{eq:kpenfun} in the form 
 \beq 
 \label{eq:kpenfun2} 
 \min_{ \Q \in \PDH(\pdim) } \left\{  \be \mathcal  L_k (\Q_k)  + (1-\be)  d(\Q_k,\hat \Q)    \right\}  , 
 \eeq 
where  penalty parameter $\be \in (0,1]$ is one-to-one with $\lam>0$ via  mapping  $\lam = (1-\be)/\be$.   Formulation \eqref{eq:kpenfun}  
is in many ways more  instructive as it depicts the role of the penalty term in more lucid manner:  one may view the penalty parameter $\be $ 
as  a "probability" or degree of belief  one assigns on the cost function $\mathcal L_k (\Q_k)$ relative to the penalty term $d(\Q_k,\hat \Q)$. 
Moreover, $\be$ is conveniently on scale $(0,1]$.  
The latter formulation  \eqref{eq:kpenfun2} via regularization parameter $\be$ will be used when constucting the fixed point algorithms in Section~\ref{sec:uniq}.   
Examples of penalty functions $d(\Q_k,\Q)$  and their properties are addressed in Section~\ref{sec:pencost}.  

\emph{Proposal 2: Joint regularization enforcing similarity among the group scatter matrices.} 
Rather than first defining a pooled scatter matrix, our second proposal simultaneously estimates the group scatter matrices $\Q_k$ along
with their `center' $\Q$.  The optimization program is now
\beq 
\label{eq:penfun} 
\underset{\{ \Q_k \}_{k=1}^K,\Q \in \PDH(\pdim)^\pdim}{\mathrm{minimize}} \, \sum_{k=1}^K \pi_k \left\{ \mathcal L_k (\Q_k)+\lambda \, d(\Q_k,\Q) \right\}.
\eeq 
The penalty term $d(\Q_k,\Q)$ is as before, but now is viewed as enforcing similarity among the $\Q_k$-s, and the `center' $\Q$ is now viewed as
an `average' of the $\Q_k$-s. Note again that it is possible to write \eqref{eq:penfun}  via penalty parameter $\be \in (0,1]$ (where $\beta=1/(1+\lambda)$)  as 
in \eqref{eq:kpenfun2}  in which case the term $ \mathcal L_k (\Q_k)+\lambda \, d(\Q_k,\Q)$ in \eqref{eq:penfun} 
is replaced by $\be \mathcal L_k (\Q_k)+ (1-\be) \, d(\Q_k,\Q)$. 
Note that for fixed $\Q_1, \ldots, \Q_K$, the value of $\Q$ is given by
\beq\label{mean_def}
\Q(\bom \pi) 
= \underset{\M \in \PDH(\pdim)}{\arg \min} \sum_{i=1}^\kdim \pi_k \,  d(\Q_k,\Q),
\eeq 
which represents the {\paino weighted mean} associated with the distance $d$. For example, the Euclidean, or Frobenius, distance $d_{{\rm F}}(\Q_k,\Q) = \left\{\tr [(\Q_k - \Q)^2]\right\}^{1/2}$ gives the standard weighted 
{\paino arithmetic mean} $\Q_{{\rm F}}(\bom \pi) = \sum_{k=1}^\kdim \pi_k \Q_k$. 

Modest modifications to Proposals 1 and 2 can be considered. For example, one might consider replacing the tuning constant $\lambda$ in either
proposal with individual tuning constants, say $\lambda_k, \, k = 1, \ldots, K$. Typically one tends to choose a larger tuning constant when sample sizes
are smaller. However, in our proposals, this does not seem to be necessary since for a particular group, say group $j$, for which $n_j$ is the smallest,
the term $d(\Q_j,\Q)$, in either proposal, affects $\Q_j$ more then the other groups since group $j$ affects the value of $\Q$ the least.
Another modification to proposal 1 is to consider other pooled estimates of scatter. In particular, if the total sample size $N$ is small,
and in particular if $N < p$, then we recommend adding a penalty term to \eqref{eq:Mpooled} itself, say one which penalized $\Q$ for deviations 
from $\bo I$ or deviations from proportionality to $\bo I$, see e.g., \cite{ollila2014regularized} or \cite{wiesel2015structured}. We also recommend such an additional penalty term 
to \eqref{eq:penfun} in Proposal~2 when $N$ is relatively small.

For the special case, $\rho_k(t) = t$ for $k = 1, \ldots K$, the solution for $\Q_k$ in Proposal~1 is 
\begin{equation} \label{eq:Friedman}  
\bo S_k(\beta) = \beta \bo S_k + (1 - \beta) \bo S, 
\end{equation}
where $\bo S_k = \frac{1}{n_k} \sum_{i=1}^{n_k} \z_{ki} \z_{ki}^\top$ is the sample covariance matrix for the $k$th group, 
$\bo S = \sum_{k=1}^K \pi_k \bo S_k$ is the pooled sample covariance matrix, and $\beta = 1/(1+\lambda)$. Note that as the
tuning constant $\lambda \rightarrow \infty$, $\bo S_k(\beta) \rightarrow \bo S$, and as $\lambda \rightarrow 0$,
$\bo S_k(\beta) \rightarrow \bo S_k$. The estimator \eqref{eq:Friedman} is the one proposed by Friedman in  \cite{friedman1989regularized}
in the context of regularized discriminant analysis. Hence, Proposal~1 can be view as a direct generalization of Friedman's estimator.

\subsection{Examples of loss functions} \label{sec:loss} 

Throughout, we assume the loss functions $\rho_k(t)$, $k = 1, \ldots K$, satisfy the following condition:
\begin{condition} \label{cond:rho}
The loss functions $\rho_k(t)$, $k=1,\ldots,\kdim$ are nondecreasing and continuous for $0 < t < \infty$. In addition, $\rho_k(t)$ is convex in $\log t$, i.e.\
the function $r_k(x) = \rho_k(e^x)$ is convex for $-\infty < x < \infty$.
\end{condition}
Typically, the loss functions $\rho_k(t)$ will be the same for $k = 1, \ldots, K$, but our general development allows for the case when they may differ.
Also, the loss functions are often standardized so that the estimators obtained by minimizing \eqref{eq:Lk} are Fisher consistent
when the $k$th sample represents a random sample from the Gaussian distribution $\mathcal N_\pdim(\bo 0, \M_k)$.  This holds if and only if 
$E[\psi_k(\chi^2_p)] = p$, where $\psi_k(t) = t u_k(t)$ and $u_k(t) = \rho_k'(t)$,

Below we provide some common examples of loss functions $\rho_k$ often encountered in the literature and used in multivariate analysis, along
with their corresponding weight functions $u_k$. The weight functions themselves are needed in section \ref{sec:uniq} to represent the corresponding
$M$-estimating equations and in deriving fixed-point algorithms for the estimators.

%\begin{itemize}
%[leftmargin=*]
%\item 
(i) {\bf{Gaussian loss function}}.  %This model includes as a particular case the Gaussian distribution with 
The density generator for $\mathcal N_\pdim(\bo 0, \M)$ is $g(t) = \exp(-t/2)$.   Hence, the corresponding  
loss and weight functions are $\rho_{\rm G}(t)=t$ and $u_G(t)=1$ respectively.  
%and the  log-likelihood 
%\begin{equation}
%g(t) = \frac{1}{\sqrt{2\pi}^p}e^{-t^2/2},
%\end{equation}
%is %(up to an additive constant) 
%\beq \label{g_ll}
%\mathcal{L}_{\rm G}(\Q;\z) = \x^\top\Q^{-1}\x - \log|\Q^{-1}|. 
%\eeq 
The corresponding objective function for the $k$th sample, i.e.\ \eqref{eq:Lk}, is then
\beq \label{eq:Lg} 
\mathcal L_{{\rm G},k} (\Q_k)  = \Tr( \M^{-1}_k \bo S_k) - \log|\Q^{-1}_k|
\eeq 
where $\bo S_k$ again denotes the sample covariance matrix of the $k$-th sample. 
%The  Gaussian log-likelihood \eqref{eq:Lg}  is  convex in inverse covariance
%matrix with respect to the Frobenius inner product, but need not
%remain convex when the metrix is changed. 
%We relate the notion
%of g-convexity to the intrinsic Riemannian manifold structure of the
%positive semi-definite cone, s

%\item 

(ii) {\bf{$t$ loss functions}}:
The density generator for a $\pdim$-variate elliptical $t$-distribution on $\nu > 0$ degrees of freedom is $g_\nu(t) = (\nu + t)^{-\frac{1}{2}(\nu + \pdim)}$.   
Hence, the corresponding loss and weight functions are $\rho_{\nu}(t) = (\nu + p)\log(\nu + t)$ and $u_\nu(t) = (\nu + p)/(\nu + t)$ respectively.  The resulting
$M$-estimators of scatter are not Fisher consistent at a multivariate Gaussian distribution. However, one can obtain such a Fisher consistent version of the
$t$ $M$-estimators by taking the loss function to be $\rho_{\nu,b}(t) \equiv \rho_{\nu}(t)/b$, with $b$ chosen so that $b = E[\psi_\nu(\chi^2_p)]/p$ and where
$\psi_\nu (t) = t u_\nu (t)$. This gives $b = \{(\nu+p)/p\} E[\chi^2_p/(\nu +\chi^2_p)]$.

(iii) {\bf{Huber's loss function}}: In his seminal work, Huber \cite{huber1964robust} proposed a family of univariate heavy-tailed distributions 
often referred to as ``least favourable distributions'' (LFDs). A LFD corresponds to a symmetric unimodal distribution which 
follows a Gaussian distribution in the middle, and a double exponential distribution in the tails. The corresponding maximum likelihood estimators 
are then referred to as Huber's $M$-estimators.  The extension of Huber's $M$-estimators to the multivariate setting, is usually defined as a generalization
of the corresponding univariate $M$-estimating equations to the multivariate setting, see e.g., \cite{maronna1976robust}. 

Here, we illustrate how Huber's $M$-estimators of multivariate scatter can be viewed as maximum likelihood estimators for a family of heavy-tailed
$\pdim$-variate elliptical distributions, namely those with density generator of the form 
$g_{\rm H}(t;c) = \exp\{- (1/2) \rho_{\rm H}(t;c)\}$,where
\beq \label{eq:huber_rho} 
\rho_{\rm H}(t;c) = \begin{cases} t/b &   \ \mbox{for} \ t \leqslant c^2, \\ 
(c^2/b) \big( \log (t/c^2) + 1 \big)  & \ \mbox{for} \ t > c^2. \end{cases}
\eeq 
These distributions follow a multivariate Gaussian distribution in the middle, but have tails that die down at an inverse polynomial rate. The distribution is
a valid distribution for $c > 0$, and for the corresponding maximum likelihood estimator of scatter, i.e. the Huber $M$-estimator of multivariate scatter, the
index $c$ represents a user defined tuning constant that determines the robustness and efficiency of the estimator. The constant $b>0$ represents a scaling factor 
since it has the effect that if $\widehat{\Q}$ represents the resulting Huber's $M$-estimator of scatter whenever $b = 1$, the Huber's $M$ estimator of scatter
when $b = b_o$ is simply $b_o\widehat{\Q}$. The scaling constant $b$ is usually chosen so that the resulting scatter estimator of scatter is Fisher consistent 
for the covariance matrix at a chosen reference $\pdim$-variate elliptical distribution, commonly the $\pdim$-variate Gaussian distribution.  Given a value
of $c$, the value of $b$ needed to obtain Fisher consistency at Gaussian distributions is $b = F_{\chi^2_{\pdim+2}}(c^2) + c^2(1-F_{\chi^2_{\pdim}}(c^2))/\pdim$.

We refer to $\rho_{\rm H}(t;c)$ as Huber's loss function, since it gives rise to Huber's weight function, namely  
\[
u_{\rm H} (t;c) = \rho_{\rm H}'(t ; c) =   \begin{cases}  1/b,  
&  \ \mbox{for} \ t \leqslant  c^2 \\ c^2/(t b),  & \ \mbox{for} \ t > c^2 \end{cases} . 
\]
Thus, an observation $\z$ with squared Mahalanobis distance (MD) $t=\z^\top \M^{-1} \z$ smaller than $c^2$ receives constant weight, while observations with large MD are heavily downweighted.  

(iv) {\bf{Tyler's loss function}}:
%Another popular choice which we will focus on is  Angular Central Gaussian (ACG) distribution %on a sphere 
%\cite{tyler1987statistical} 
%introduced by the following
%\begin{definition} 
%\label{def}
%Assume $\Q \succ 0$, then the function
The Gaussian loss function can be viewed as a limiting case of either a $t$ loss function or Huber's loss function by considering $\nu \rightarrow \infty$ or
$c \rightarrow \infty$ respectively. At the other extreme, i.e. as $\nu \rightarrow 0$ or $c \rightarrow 0$, one obtains Tyler's loss function
$\rho_{\rm T}(t)=\pdim \log t$, whose corresponding weight function is $u_{{\rm T}}(t)=p/t$. To obtain this limit using Huber's loss function, first note that
the Huber's $M$-estimator is not affected by replacing $\rho_{\rm H}(t;c)$ with 
$\rho^*_{\rm H}(t;c) = \rho_{\rm H}(t;c) - h(c,b)$, with $h(c,b) = c^2\{1-\log(c^2)\}/b$ being constant in $t$. Then, since $c^2/b \rightarrow p$
as  $c \rightarrow 0$, it follows that $\rho^*_{\rm H}(t;c) \rightarrow \rho_{\rm T}(t)$. 
Using this loss function, the corresponding objective function \eqref{eq:Lk} for the $k$th sample becomes 
\beq \label{acg_ll}
\mathcal{L}_{{\rm T},k}(\Q_k)= \frac{p}{n_k}\sum_{i=1}^{n_k}  \log\(\x_{ki}^\top\Q_k^{-1}\x_{ki}\)-\log|\Q_k^{-1}|.
\eeq 
A minimizer of \eqref{acg_ll} yields Tyler's \cite{tyler1987distribution} distribution-free $M$-estimator of scatter.  
Note that \eqref{acg_ll} does not have a unique minimum, since if $\Q_k$ is a minimum then so is $b \Q_k$ for
any $b > 0$. That is, Tyler's $M$-estimator estimates the shape of $\Q_k$ only. A Fisher consistent estimator of the
covariance matrix at a Gaussian distribution can be obtained by multiplying any particular minimum $\Q_k$ by
$b_k = \mbox{Median}\{\x_{ki}^\top\Q_k^{-1}\x_{ki} ; i = 1, \ldots, n_k\}/\mbox{Median}(\chi^2_p)$. In discriminant application reported in Section~\ref{sec:RDA}, this scaling is utilized.  

It is worth noting that the objective function \eqref{acg_ll} does not correspond to the negative log-likelihood of any
family of RES distributions, since $g(t) = e^{-\rho_{\rm T}(t)/2} = t^{-p/2}$ is not a valid density generator. However, 
\eqref{acg_ll} does correspond the negative log-likelihood function for a $\pdim$-variate Angular Central Gaussian (ACG) distribution \cite{tyler1987statistical}.
The ACG distribution is defined on the unit $\pdim$-sphere $\mathcal{S}^{p-1} = \{ \theta \in \R^p; \theta^\top\theta = 1 \}$, and its p.d.f.\ relative
to the uniform distribution on $\mathcal{S}^{p-1}$ has the form 
\beq
f(\x|\Q) =  C_\pdim |\Q|^{-1/2} (\x^\top  \Q^{-1}\x)^{-\pdim/2}, \quad \x \in \mathcal{S}^{p-1}.
\label{tyler_distr}
\eeq
%C = \frac{\Gamma(p/2)}{2\sqrt{\pi^p}}
Here the scatter matrix parameter $\Q\in \PDH(\pdim)$ is uniquely defined up to a positive scalar.  %Thus $\Q$ is a  shape matrix. 
%This law is usually referred to as the Angular Central Gaussian (ACG) distribution on a sphere \cite{tyler1987statistical}. %\end{definition}
Although ACG distribution does not belong to the class of RES distributions, %but is a member of generalized elliptical distributions \cite{frahm2004generalized}. 
it is related to it. Namely, % to the class of elliptical populations. 
 an important property of the elliptical family is that 
%  is that the shape matrix of a population does not change when the vector is divided by its Euclidean norm. After
 $\z/\|\z\|$  has ACG distribution for any $\z \sim \mathcal E_\pdim(0, \M,g)$.  Note that replacing $\x_{ki}$ with $\x_{ki}/\|\x_{ki}\|$
in \eqref{acg_ll} does not affect its minimizer since it is equivalent to subtracting the term $\frac{p}{n_k}\sum_{i=1}^{n_k}  \log\(\x_{ki}^\top\x_{ki}\)$,
which does not depend on $\Q_k$. Consequently, the distribution of the resulting $M$-estimator of scatter is the same under any elliptical distribution.
% with with covariance being a scaled version of the previous one. 
%This permits  treating all distributions from elliptical % and generalized elliptical 
%family together using a single framework of ACG populations. 

%\end{equation} 

\section{Preliminaries on $g$-convexity} \label{sec:gconv}

The optimizaton problems defined by \eqref{eq:Mpooled}, \eqref{eq:kpenfun} and \eqref{eq:penfun} are easiest to handle when the target functions to be minimized are convex. The Gaussian negative log-likelihood \eqref{eq:Lg}, for example, is well known to be strictly convex as a function of the inverse covariance matrix. Unfortunately, the functions 
\eqref{eq:Mpooled}, \eqref{eq:kpenfun} and \eqref{eq:penfun} in general tend not to possess this convexity property. Other notions of convexity, though, can be applied. 
Briefly summarizing, convexity properties of 
sets in metric spaces depend on the definition of the shortest paths (geodesic curves) between pairs of points. 
%... in spacedepend on the specific metric (shortest path)...''}.  
Thus, when the metric is altered, geodesic curves change and consequently so does the notion of convexity. In our treatment, we use the notion of $g$-convexity relative to the 
intrinsic Riemannian manifold structure of the positive semi-definite cone; see \cite{absil2009optimization} and \cite{wiesel2015structured} for a more detailed exposition. 

%It has been recently shown by 
%\citet{wiesel2012unified} that the ACG negative log-likelihood (\ref{acg_ll}) is a $g$-convex function of the inverse covariance matrix, as well as a $g$-convex function of the 
%covariance matrix itself. This allowed for the development of numerous efficient robust covariance estimation algorithms \cite{wiesel2012unified, soloveychik2013group, 
%wiesel2011regularized} and led to a deeper understanding of the properties of Tyler's estimator and its non-asymptotic performance analysis \cite{soloveychik2014non}.

The set $\PDH(\pdim)$ can be endowed with a smooth Riemannian manifold structure  by changing the usual Euclidean metric to the Riemannian one. The latter can be defined by stipulating the notion of a {\paino geodesic path} from $\M_0 \in \PDH(\pdim)$ to $\M_1 \in \PDH(\pdim)$ and setting it to be
\begin{align} 
\label{eq:geopath}
\M_t &= \M_0^{1/2}\left(\M_0^{-1/2}\M_1\M_0^{-1/2}\right)^t \M_0^{1/2}, \quad t \in [0,1].
\end{align}
Given $\M_0, \M_1 \in \PDH(\pdim)$, we have $\M_t \in \PDH(\pdim)$ for $0 \leqslant t \leqslant 1$, therefore, $\PDH(\pdim)$ is a {\paino geodesically convex set}. A function $h: \PDH(\pdim) \rightarrow \R$ is a {\paino $g$-convex function} if
\beq \label{eq:gconfun}
h(\M_t) \leqslant (1-t)~h(\M_0) + t~h(\M_1),\;\; t \in (0,1). 
\eeq
If the inequality is strict, $h$ is said to be strictly $g$-convex. When $\pdim =1$, $g$-convexity/strict $g$-convexity is equivalent to the function 
$h(e^x)$ being convex/strictly convex over $\R$, i.e. $h(s)$ is convex/strictly convex in $\log(s)$ for $s > 0$. 
the concept of $g$-convexity enjoys properties similar to those of convexity in Euclidean spaces. In particular, if $h$ is $g$-convex on $\PDH(\pdim)$ then any local minimum is a global minimum. Furthermore, if a minimum is obtained in $\PDH(\pdim)$ then the set of all minima form a $g$-convex subset of $\PDH(\pdim)$. If $h$ is strictly $g$-convex and a minimum is obtained in $\PDH(\pdim)$, then it is unique, see \cite{wiesel2012geodesic,wiesel2015structured} and reference therein for more details. An important additional property of $g$-convexity, 
not shared by convexity in Euclidean spaces, is that if $h(\M)$ is $g$-convex/strict $g$-convex in $\M$, then it is also $g$-convex/strict $g$-convex  in $\M^{-1}$.

Similarly, given a naturally induced manifold structure 
over $\PDH(\pdim) \times\PDH(\pdim)$, we say that a function $u: \PDH(\pdim) \times\PDH(\pdim) \rightarrow \R$ is jointly $g$-convex if 
\[
u(\M_t^*,\M_t) \leqslant (1-t)~u(\M_0^*,\M_0) + t~u(\M_1^*,\M_1) ~ \mbox{ for } ~ t \in (0,1),
\]
where $\M_t^*$ and $\M_t$ are defined as in \eqref{eq:geopath}.

To establish the existence and uniqueness of the solutions to the minimization problems \eqref{eq:Mpooled}, \eqref{eq:kpenfun} and \eqref{eq:penfun}, a basic requirement is 
that loss functions $\mathcal L_k(\M_k)$ in \eqref{eq:Lk} be $g$-convex in $\M_k \in \PDH(\pdim)$.  This is achieved when the respective loss functions 
$\rho_k(t)$, $k=1,\ldots,\kdim$, satisfy Condition~\ref{cond:rho}, see \cite{zhang2013multivariate} or \cite[Lemma~1]{ollila2014regularized}.  Many common loss 
functions satisfy this condition, such as the Gaussian, $t$, Huber and Tyler loss functions  given in Section~\ref{sec:loss}.  Using the terminology of this
section, Condition~\ref{cond:rho} simply requires that $\rho_k(t)$ be a nondecreasing, continuous $g$-convex function.

\section{Distance measures for covariance matrices} 
\label{sec:pencost}  

Optimization problems \eqref{eq:kpenfun} and \eqref{eq:penfun} balance the overall loss between the separate group $M$-estimation with shrinkage towards a mutual 
joint (pooling) center. 
%In this section, we discuss properties of the distance function $d(\A,\B )$ and provide examples of distance functions that can be used in our framework. 
%\subsection{Penalty Choice}
The penalty terms introduced in \eqref{eq:kpenfun} and \eqref{eq:penfun} require specifying a distance function $d(\A,\B): \PDH(\pdim) \times\PDH(\pdim)  \rightarrow \R_0^+$, 
(we sometimes refer to $d$ as a penalty function due to its role here). Listed below are some properties one may desire for a distance function.
\begin{itemize} 
\item[(D1)] $d(\A, \B)=0$ if $\A = \B$,
\item[(D2)] $d(\A,\B)$ is jointly $g$-convex,
\item[(D3)] {\it symmetry}: $d(\A,\B)= d(\B,\A)$. 
\item[(D4)] {\it affine invariance}: $d(\A,\B) = d(\C\A\C^\top,\C\B\C^\top)$ for any nonsingular $\C$.
\item[(D5)] {\it scale invariance}: $d(c_1\A,c_2\B)= d(\A,\B)$  for $c_1,c_2>0$,
\end{itemize}
D1 and D2 are necessary requirements. D1 is an obvious requirement, whereas D2 is needed to guarantee that the optimization problem \eqref{eq:penfun} is $g$-convex. 
Properties D3, D4 and D5 are considered optional. When property D4 holds, the resulting estimators of the scatter matrices are affine equivariant. That is,
if we transform the data $\x_{ki} \rightarrow \C\x_{ki}$ for all $k = 1, \ldots, \kdim$; $i = 1, \ldots, n_k$, then 
$\{\Q_1, \ldots, \Q_\kdim, \Q\}  \rightarrow \{\C\Q_1\C^\top, \ldots, \C\Q_\kdim\C^\top, \C\Q\C^\top \}$. Property D5 is useful if we are primarily interested in
the shape of the scatter matrices, that is, the scatter matrices up to a positive scalar. Some scatter scatter estimators (such as Tyler's $M$-estimator) are 
shape estimators only, and for such shape estimators, D5 is necessary.  Property D5 is also important if the individual covariances are believed to be scaled differently,
and so one may wish to only poll together their shapes but not their overall scale. 

As noted in section \ref{gen_set}, every distance function induces a notion of a mean, defined by \eqref{mean_def}.
For example, when $\PDH(\pdim)$ is treated as being embedded within the space of symmetric matrices, 
it inherits the natural Euclidean distance $d_{{\rm F}}(\A,\B) = \| \A - \B \|_{\rm F}^2$, which is the usual Frobenius metric.
The mean \eqref{mean_def} corresponding to the Euclidean distance $d_{{\rm F}}$ is the weighted {\paino arithmetic mean} 
\begin{equation}
\Q_{{\rm F}}(\bom \pi) = \sum_{k=1}^\kdim \pi_k \Q_k. 
\end{equation}
Unfortunately, the Frobenius metric fails to be jointly $g$-convex function, and so does not fit into our framework. However, following are examples of 
$g$-convex distance functions, which can be used to construct $g$-convex optimization problems.
%where $\eta_k = \lam_k/ \sum_{k=1}^\kdim \lam_k$. 
 
\subsection{ Riemannian distance}

The Riemannian distance, 
\[
\dR(\A,\B) =  \| \log (\A^{-1/2}\B\A^{-1/2})\|_{\rm F}^2,
\] 
is the length of the geodesic curve between $\A$ and $\B$ as defined in \eqref{eq:geopath} and hence it is a natural and widely studied distance between positive definite matrices. See e.g., \cite{bhatia2009positive} and references therein for a comprehensive survey. The Riemannian distance satisfies properties D1, D2, D3 and D4. 
The unique extremum  $\Q_{{\rm R}}({\bom \pi})$ of  \eqref{mean_def} 
is a weighted form of the {\paino Karcher mean}  (or Riemannian or geodesic mean), see \cite{moakher2005differential,bhatia2009positive}.
%hatia2006riemannian}.  
It was shown in \cite{moakher2005differential} to be the unique positive definite solution of
\begin{align} \label{eq:karcher}
%\sum_{k=1}^\kdim \pi_k \log(\Q^{\frac{1}{2}}_{{\rm R}} \Q_k^{-1} \Q_{{\rm R}}^{\frac{1}{2}})= \bo 0  \Leftrightarrow�
\sum_{k=1}^\kdim \pi_k \log( \Q_k^{-1} \Q_{{\rm R}})= \bo 0.
\end{align}
Equation \eqref{eq:karcher} can be written in different forms, e.g., $\sum_{k=1}^\kdim \pi_k \log( \Q^{1/2}_{{\rm R}} \Q_k^{-1} \Q_{{\rm R}}^{1/2})= \bo 0$, using the formula $\A^{-1} \log(\B)\A = \log(\A^{-1}\B\A)$, 
valid for any invertible matrix $\A$ and any matrix $\B$ having real positive eigenvalues \cite{higham2008functions}.
Equation \eqref{eq:karcher} does not have a closed-form solution and a number of rather complex numerical approaches have been proposed to compute the solution. See e.g.,  \cite{bini2013computing,zhang2013majorization} and references therein for a number of such techniques.

\subsection{ Ellipticity distance } 

The ellipticity distance,
\[
\dE(\A,\B) = \pdim \log \frac{1}{\pdim} \Tr(\A^{-1}\B)-\log|\A^{-1}\B|,
\]
was first introduced for the penalized robust covariance estimation problem in \cite{wiesel2012unified}. Therein, $\B$ corresponds to a fixed shrinkage target shape matrix, and it is shown that $d_{{\rm E}}(\A,\B)$ is $g$-convex in $\A$ for fixed $\B$. We refer $d_{{\rm E}}$ as the {\paino ellipticity distance} since it is related to the ellipticity factor, 
$e(\M)=  \frac{1}{\pdim} \tr(\M)/|\M |^{1/\pdim}$, i.e.\ the ratio of the arithmetic and geometric means of the eigenvalues of $\M$. The factor $e(\M) \geqslant 1$ with equality 
if and only if $\M \propto \I$. Its relationship to $d_{{\rm E}}$ is given by $d_{{\rm E}}(\A,\B) =  \log e(\A^{-1/2} \B \A^{-1/2})$. 

The ellipticity distance is scale invariant, i.e. it satisfies D5. It also satisfies properties D1, D2, and D4. We summarize these properties in the
following proposition, which also characterize its induced mean \eqref{mean_def}. 
The proof follows readily from the joint $g$-convexity of the trace term $\Tr(\A^{-1}\B)$, which is proven in Lemma \ref{tr_conv} of the Appendix.
% and that of the second term $\log|\A^{-1}\B |$ is easy to show; see the proof of Proposition~1.  

\begin{proposition}
\label{de_prop}
The ellipticity distance  $d_{{\rm E}}(\A,\B)$ satisfies D1, D2, D4 and D5. Furthermore, the optimization problem \eqref{mean_def} has a unique minimizer (up to a scale),
with the minimizer being the unique solution (up to scale) of the fixed-point equation
\beq \label{eq:Emean}
\Q_{{\rm E}} =  \left( \sum_{k=1}^\kdim \pi_k  \frac{p \Q_k^{-1} }{\Tr(\Q_k^{-1}\Q_{{\rm E}})} \right)^{-1}.
\eeq 
\end{proposition}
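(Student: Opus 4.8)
The plan is to dispatch D1, D4, D5 by direct computation, reduce D2 to the joint $g$-convexity of the trace term (Lemma~\ref{tr_conv}), and then read off the fixed-point equation \eqref{eq:Emean} from the stationarity condition of the $g$-convex objective in \eqref{mean_def}, treating existence and uniqueness up to scale separately because of the scale invariance. For the elementary properties, let $\mu_1,\dots,\mu_\pdim>0$ be the eigenvalues of $\A^{-1}\B$ (positive, since $\A^{-1}\B$ is similar to the positive definite $\A^{-1/2}\B\A^{-1/2}$); then $\Tr(\A^{-1}\B)=\sum_i\mu_i$ and $|\A^{-1}\B|=\prod_i\mu_i$, so $\dE(\A,\B)=\pdim\log\big(\tfrac1\pdim\sum_i\mu_i\big)-\sum_i\log\mu_i\ge 0$ by AM--GM, with equality when $\A=\B$ (and more generally whenever $\B\propto\A$), giving D1. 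For D5, replacing $\A^{-1}\B$ by $c_1^{-1}c_2\A^{-1}\B$ multiplies the trace by $c_2/c_1$ and the determinant by $(c_2/c_1)^\pdim$, and the two resulting $\pdim\log(c_2/c_1)$ terms cancel; for D4, $(\C\A\C^\top)^{-1}(\C\B\C^\top)=\C^{-\top}(\A^{-1}\B)\C^\top$ is similar to $\A^{-1}\B$, so its trace and determinant are unchanged.

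For D2, I would write $\dE(\A,\B)=\pdim\log\big(\tfrac1\pdim\Tr(\A^{-1}\B)\big)+\log|\A|-\log|\B|$ and examine the two pieces along a joint geodesic $(\M_t^*,\M_t)$ as in \eqref{eq:geopath}. The piece $\log|\M_t^*|-\log|\M_t|$ is affine in $t$, because $|\M_t|=|\M_0|^{1-t}|\M_1|^{t}$ (and likewise for $\M_t^*$), hence jointly $g$-affine and neutral for convexity. For the trace piece, Lemma~\ref{tr_conv} shows that $t\mapsto\Tr((\M_t^*)^{-1}\M_t)$ is a nonnegative combination of exponentials of $t$, hence log-convex; thus $\log\Tr((\M^*)^{-1}\M)$ is jointly $g$-convex, and composing it with the increasing affine map $x\mapsto\pdim x-\pdim\log\pdim$ preserves this. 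Adding back the $g$-affine determinant term, $t\mapsto\dE(\M_t^*,\M_t)$ is convex on $[0,1]$, which is exactly joint $g$-convexity.

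For the induced mean, fix $\{\Q_k\}_{k=1}^{\kdim}$ and set $\phi(\Q)=\sum_{k}\pi_k\,\dE(\Q_k,\Q)$, which is $g$-convex in $\Q$ by D2. \emph{Existence}: by D5, $\phi(c\Q)=\phi(\Q)$, so it suffices to minimize $\phi$ over the cross-section $\{\Q\in\PDH(\pdim):|\Q|=1\}$; there the determinant terms are constant while $\Tr(\Q_k^{-1}\Q)\ge\lambda_{\max}(\Q)/\lambda_{\max}(\Q_k)\to\infty$ as $\Q$ leaves every compact subset of $\PDH(\pdim)$ (some eigenvalue of $\Q$ must blow up, $|\Q|$ being fixed), so $\phi$ is coercive there and attains its minimum. \emph{Critical-point equation}: $\PDH(\pdim)$ is open, so a minimizer is a stationary point; using $\partial\Tr(\Q_k^{-1}\Q)/\partial\Q=\Q_k^{-1}$, $\partial\log|\Q|/\partial\Q=\Q^{-1}$ and $\sum_k\pi_k=1$, the equation $\Nabla\phi(\Q)=\bo 0$ reads $\sum_k\pi_k\,\pdim\Q_k^{-1}/\Tr(\Q_k^{-1}\Q)=\Q^{-1}$, i.e.\ \eqref{eq:Emean}; and conversely, by $g$-convexity, any solution of \eqref{eq:Emean} in $\PDH(\pdim)$ is a global minimizer. \emph{Uniqueness up to scale}: if $\Q_0,\Q_1$ both minimize with $\Q_1\not\propto\Q_0$, then $D=\Q_0^{-1/2}\Q_1\Q_0^{-1/2}$ has at least two distinct eigenvalues $\mu_j$, and along the geodesic $\Q_t$ from $\Q_0$ to $\Q_1$ one finds $\Tr(\Q_k^{-1}\Q_t)=\sum_j w_{kj}\,\mu_j^{t}$ with all $w_{kj}>0$, which is strictly log-convex; hence each $\dE(\Q_k,\Q_t)$, and therefore $\phi(\Q_t)$, is strictly convex in $t$, contradicting that both endpoints minimize. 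So the minimizer set — equivalently, by D5, the solution ray of \eqref{eq:Emean} — is a single ray $\{c\,\Q_{{\rm E}}:c>0\}$.

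The substantive content is the joint $g$-convexity needed for D2, and that is delegated to Lemma~\ref{tr_conv}; within the present argument the delicate point is the scale degeneracy, which forces the existence argument onto a cross-section and makes uniqueness, together with the identification of minimizers with the solutions of \eqref{eq:Emean}, hold only modulo positive scaling.
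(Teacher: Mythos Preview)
Your proof is correct and follows the same approach as the paper: D1, D4, D5 by direct computation, D2 delegated to Lemma~\ref{tr_conv} together with the $g$-affinity of $\log|\cdot|$, and the fixed-point equation~\eqref{eq:Emean} from the stationarity condition $\nabla_\Q\sum_k\pi_k\,\dE(\Q_k,\Q)=\bo 0$. The paper's proof is terser --- it simply asserts that, by $g$-convexity, vanishing of the gradient is necessary and sufficient for optimality, and reads off~\eqref{eq:Emean} --- whereas you supply the details it omits: the coercivity argument on the cross-section $\{|\Q|=1\}$ for existence, and the strict log-convexity of $t\mapsto\Tr(\Q_k^{-1}\Q_t)=\sum_j w_{kj}\mu_j^t$ (with all $w_{kj}>0$) along any geodesic between non-proportional endpoints for uniqueness up to scale. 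These additions are sound and make the ``unique up to scale'' claim rigorous rather than implicit.
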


The ellipticity induced mean $\Q_{{\rm E}}$ is related to the {\paino harmonic mean}, \cite{bhatia2009positive}, of positive definite matrices.  
In particular, it can be viewed as an (implicitly) weighted harmonic mean of normalized scatter matrices.

\subsection{Kullback-Leilber distance } 

The information theoretic (Gaussian)  Kullback-Leibler (KL) divergence \cite{cover2012elements} is defined as 
\[
\dI(\A,\B) =  \Tr(\A^{-1}\B) - \log|\A^{-1}\B | - \pdim.
\]
In statistics literature it has gained popularity due to the seminal works of James and Stein \cite{james1961estimation} who utilized it in the risk function of covariance matrices. It has also been recently used as shrinkage penalty in covariance estimation problems in \cite{sun2014regularized}, who considered a single sample case with $\B$ played being a fixed shrinkage target matrix. The next claim shows the $g$-convexity of  KL-distance and  provides the respective mean \eqref{mean_def}. 

\begin{proposition}
\label{di_prop}
The KL-distance  $\dI(\A,\B)$ satisfies D1, D2 and D4. Furthermore, the optimization problem \eqref{mean_def} has the unique minimizer 
\beq \label{eq:Imean}
\Q_{\rm KL}(\bom \pi) =  \left( \sum_{k=1}^\kdim \pi_k \Q_k^{-1}   \right)^{-1},
\eeq 
which corresponds to a weighted harmonic mean. 
\end{proposition}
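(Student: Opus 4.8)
The plan is to establish the three claimed properties (D1, D2, D4) directly, and then to derive the closed-form minimizer of the induced mean by a first-order optimality argument. First I would dispose of D1 and D4, since these are elementary. For D1, plug in $\A = \B$: then $\A^{-1}\B = \I$, so $\Tr(\A^{-1}\B) = \pdim$ and $\log|\A^{-1}\B| = \log 1 = 0$, hence $\dI(\A,\A) = \pdim - 0 - \pdim = 0$. For D4, note that under $\A \mapsto \C\A\C^\top$ and $\B \mapsto \C\B\C^\top$ we have $(\C\A\C^\top)^{-1}(\C\B\C^\top) = \C^{-\top}\A^{-1}\B\C^\top$, which is similar to $\A^{-1}\B$; since both the trace and the determinant are invariant under similarity transformations, $\dI$ is unchanged. (Note $\dI$ does not satisfy D3 or D5, and nothing needs to be shown there.)

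For D2, joint $g$-convexity, I would split $\dI(\A,\B) = \Tr(\A^{-1}\B) - \log|\A^{-1}\B| - \pdim$ into its three summands. The constant $-\pdim$ is trivially $g$-convex. The term $-\log|\A^{-1}\B| = \log|\A| - \log|\B|$ is actually $g$-affine (geodesically linear) in $(\A,\B)$ jointly: along the geodesics $\A_t,\B_t$ of \eqref{eq:geopath}, $\log|\A_t|$ is linear in $t$ by the standard determinant identity $\log|\M_0^{1/2}(\M_0^{-1/2}\M_1\M_0^{-1/2})^t\M_0^{1/2}| = (1-t)\log|\M_0| + t\log|\M_1|$, and likewise for $\B_t$; hence this term satisfies the defining inequality \eqref{eq:gconfun} with equality. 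The remaining and only substantive piece is the trace term $\Tr(\A^{-1}\B)$, whose joint $g$-convexity is exactly the content of Lemma~\ref{tr_conv} in the Appendix, which I am entitled to invoke. Summing a jointly $g$-convex function, a $g$-affine function, and a constant yields a jointly $g$-convex function, establishing D2.

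For the induced mean, I would solve \eqref{mean_def} with $d = \dI$, i.e.\ minimize $F(\Q) = \sum_{k=1}^\kdim \pi_k\,\dI(\Q_k,\Q) = \sum_k \pi_k\{\Tr(\Q_k^{-1}\Q) - \log|\Q_k^{-1}\Q| - \pdim\}$ over $\Q \in \PDH(\pdim)$. Since $\dI(\A,\cdot)$ is $g$-convex in its second argument (this follows from the $g$-convexity just shown, specializing the first slot to a constant, together with the fact quoted in Section~\ref{sec:gconv} that $g$-convexity in $\M$ implies $g$-convexity in $\M^{-1}$; or more simply it is the KL penalty of \cite{sun2014regularized} with roles suitably assigned), a critical point is the unique global minimizer. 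Differentiating in the Euclidean sense: $\nabla_\Q \Tr(\Q_k^{-1}\Q) = \Q_k^{-1}$ and $\nabla_\Q\bigl(-\log|\Q_k^{-1}\Q|\bigr) = \nabla_\Q(\log|\Q_k| - \log|\Q|) = -\Q^{-1}$, so setting $\nabla F(\Q) = \sum_k \pi_k(\Q_k^{-1} - \Q^{-1}) = \bo 0$ gives $\Q^{-1} = \sum_k \pi_k \Q_k^{-1}$, i.e.\ $\Q_{\rm KL}(\bom\pi) = \bigl(\sum_{k=1}^\kdim \pi_k \Q_k^{-1}\bigr)^{-1}$, the weighted harmonic mean. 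Uniqueness follows from strict $g$-convexity of $\dI(\A,\cdot)$ (the $-\log|\cdot|$ part already being strictly $g$-convex, which cannot be cancelled since the trace part is $g$-convex), so the critical point is the unique minimizer.

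I expect the only real obstacle to be the joint $g$-convexity of the trace term, but since that is packaged as Lemma~\ref{tr_conv} and may be assumed, the proof is essentially bookkeeping: checking D1/D4 by direct substitution and similarity-invariance, observing that the log-determinant term is $g$-affine, invoking the Lemma for the trace, and then a one-line matrix calculus computation for the mean plus a strict-convexity remark for uniqueness. A minor point to be careful about is making sure the first-order stationarity computation is over the open cone $\PDH(\pdim)$ (so there is no boundary to worry about) and that $F$ is coercive enough for a minimizer to exist — but coercivity is not even needed once $g$-convexity hands us "any critical point is the global minimum," so exhibiting the critical point suffices.
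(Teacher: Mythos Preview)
Your approach is essentially the paper's: verify D1 and D4 directly, split $\dI$ into the trace term (handled by Lemma~\ref{tr_conv}), the $g$-affine log-determinant term, and a constant to obtain D2, and then solve the gradient equation $\nabla_\Q F(\Q)=\bo 0$ to get the weighted harmonic mean. The paper's proof is terser but identical in substance.

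One small inconsistency to fix: in your uniqueness argument you write that ``the $-\log|\cdot|$ part [is] already strictly $g$-convex,'' but you yourself established earlier (correctly) that $-\log|\A^{-1}\B|$ is $g$-\emph{affine}, so it cannot supply strict $g$-convexity. The strictness in fact comes from the trace term: Lemma~\ref{tr_conv} gives strict $g$-convexity of $\log\Tr(\A^{-1}\B)$, and composing with the strictly increasing, strictly convex exponential preserves strict convexity along geodesics, so $\Tr(\A^{-1}\B)$ is strictly $g$-convex; adding the $g$-affine log-determinant piece keeps it strict. Alternatively, and perhaps more simply for the mean problem alone, note that $F(\Q)=\sum_k\pi_k\Tr(\Q_k^{-1}\Q)-\log|\Q|+\text{const}$ is strictly convex in the ordinary Euclidean sense on $\PDH(\pdim)$ (linear plus strictly convex), which already forces uniqueness of the critical point.
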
 

The metrics $d_{{\rm E}}$ and $\dI$ are thus closely related, however, the former is a scale invariant metric while the latter is not. At the same time $\Q_{{\rm E}}$ is given by an implicit weighted harmonic mean equation \eqref{eq:Emean} while $\Q_{{\rm KL}}$ is given explicitly by \eqref{eq:Imean}. %This suggests an ituitive rule that the more we gain in scale invariance properties, the more we lose in the simplicity of calculations. 
Thus due to scale invariance, one looses in the simplicity of calculation. The scale invariance property of the penalty function is especially useful  in problems when the unknown elliptical distributions of the samples are different (heterogeneous environment) and hence the scatter matrices obtained using same loss function would have a priori different scalings. This follows from the fact that any $M$-estimator $\hat \M$ provides an estimate up to a constant and the constant of proportionality depends on the underlying distribution, as well.

\section{Fixed-point algorithms}  \label{sec:uniq} 
 
In this section we propose fixed-point (FP) algorithms for computing the regularized scatter estimators. We first consider Proposal~2, which corresponds to optimization 
problem \eqref{eq:penfun}. If we differentiate \eqref{eq:penfun} with respect to $\Q_k^{-1}, k = 1, \ldots, K$ and $\Q$, we obtain the first order optimality conditions 
on the extremum, i.e.\ the $M$-estimating equations, which are
\begin{align}
\frac{\partial \mathcal{L}_k (\Q_k)}{\partial \Q_k^{-1}} + \lambda \frac{\partial d(\Q_k,\Q)}{\partial \Q_k^{-1}} &= \bo 0,\quad k=1,\dots,K, \\
\sum_{k=1}^K\pi_k  \frac{ \partial d(\Q_k,\Q)}{\partial \Q} &= \bo 0, \label{mean_dd}.
\end{align}
If we denote $\partial d(\Q_k,\Q)/\partial \Q_k^{-1}$ by $\S(\Q_k,\Q)$ and use the form of $\mathcal L_k (\Q_k)$ given in \eqref{eq:Lk}, then 
the first equation becomes
\begin{equation} \label{eq:FPk} 
\Q_k = \frac{1}{n_k} \sum_{i=1}^{n_k}  u_k(\x_{ki}^\hop \Q_k^{-1}\x_{ki}) \x_{ki}\x_{ki}^\hop +  \lam \S(\Q_k,\Q),
\end{equation}
with $u_k(t) = \rho_k^\prime(t)$, $k = 1,\ldots,\kdim$, acting as weight functions. 
% and the second one as
%\begin{equation}
%\sum_{k=1}^K \RR_ {{\rm X}}(\Q_k,\Q) = 0,
%\label{mean_dd}
%\end{equation}
%where $\RR_ {{\rm X}}(\Q_k,\Q) = \frac{\partial d_ {{\rm X}}(\Q_k,\Q)}{\partial \Q}$. 
The second equation \eqref{mean_dd} coincides with the definition of the mean given in \eqref{mean_def}. Hence, a general FP algorithm for finding the solution to the optimization
problem \eqref{eq:penfun} is given by the iterative scheme  
\begin{align}  
\Q_k &\leftarrow \frac{1}{n_k} \sum_{i=1}^{n_k}  u_k(\x_{ki}^\hop \Q_k^{-1}\x_{ki}) \x_{ki}\x_{ki}^\hop + \lam \S(\Q_k,\Q)  \label{eq:first_it}, \quad k=1,\ldots,\kdim \\
\Q &\leftarrow \Q_{{k}}(\bom \pi), \label{eq:sec_it}
\end{align}
which updates the covariance matrices in natural order, i.e.,  from $\Q_{1}, \ldots \Q_{\kdim}$ to $\Q$, and cyclically repeating the procedure until
convergence.  One may view it as blockwise alternating minimization algorithm in which one minimizes the objective function in one block at a time while keeping  others fixed at their current iterates.
% $\left\{\Q_{1,(j)}, \ldots \Q_{K,(j)}, \Q_{(j)}\right\}$ to  $\left\{\Q_{1,(j+1)}, \ldots \Q_{K,(j+1)}, \Q_{(j+1)}\right\}$ at each iteration.
The convergence properties of such a scheme is omitted and is a subject of a follow up paper (under preparation)   %emains %an open question. I
in which blockwise minimization majorization (MM) algorithmic scheme \citep{hunter2004tutorial,razaviyayn2013unified} is utilized for proving convergence. We note that an MM algorithm (in the single covariance estimation problem, $\kdim=1$) have been recently used in  \cite{sun2014regularized} and \cite{wiesel2015structured} in constructing simple convergence proof of FP algorithm of regularized Tyler's $M$-estimator.  At this point it simply suffices to say that in practice, when the scheme converges, then, by $g$-convexity (or strict $g$-convexity), it 
must converge to a solution (or the unique solution) to the optimization problem \eqref{eq:penfun}.

For the first proposal, which corresponds to the optimization problem \eqref{eq:Mpooled}-\eqref{eq:kpenfun}, one first needs to compute $\Q$, i.e.\ the optimizer 
of \eqref{eq:Mpooled}. This simply involves computing a non-penalized $M$-estimator of scatter,  for which computational algorithms have been well studied, see e.g., 
\cite{kent1991redescending,dumbgen2016new}. A simple FP algorithm for $\Q$ is given by the iterative scheme
\[  \Q \leftarrow \frac{1}{N} \sum_{k=1}^\kdim \sum_{i=1}^{n_k} u_k(\x_{ki}^\hop \Q^{-1}\x_{ki}) \x_{ki}\x_{ki}^\hop.  \]
Given this value of $\Q$, the FP algorithm for the $\Q_k$'s, the optimizers of \eqref{eq:kpenfun}, corresponds to \eqref{eq:first_it} with $\Q$ held fixed.

The exact forms of the derivatives for $\dR, \dE$ and $\dI$ are respectively:
\begin{align*}
\text{\scriptsize $\bullet$}& \quad \  \dR'(\M_k,\M)  \ %&= \M_k^{\frac{1}{2}}\log(\M_k^{-\frac{1}{2}}\M\M_k^{-\frac{1}{2}})\M_k^{\frac{1}{2}}, \\
= 2 \log(\M\M_k^{-1})\M_k, \\ %= \M_k \log(\M_k^{-1} \M), \\ 
\text{\scriptsize $\bullet$}& \quad \ \dE'(\M_k,\M) \ \,  = \frac{\pdim\M}{\Tr(\M_k^{-1}\M)} - \M_k, \\
\text{\scriptsize $\bullet$}& \quad \dI'(\M_k,\M)  \, = \M-\M_k.
%\text{\scriptsize $\bullet$} \quad \dS'(\M_k,\M) &= - \M_k(\M_k+\M)^{-1}\M_k +  \frac{1}{2}\M_k,
\end{align*} 
%where we calculate only once if the distance is symmetric and twice if the derivatives w.r.t. to each argument differ.
%Note that we also have that  
%\begin{align*}
%\frac{\pr \dR(\M_k,\M)}{\M} &= 2 \M_k^{-\frac{1}{2}}\log(\M_k^{\frac{1}{2}}\M_k^{-1}\M^{\frac{1}{2}})\M^{-\frac{1}{2}} \\ &= 2 \log(\M_k^{-1} \M) \M^{-1} 
%\end{align*}
The specific form of the fixed point (FP) algorithm which utilizing each of these distances is given below. 
%The forms are for the optimization
%problem \eqref{eq:penfun}, with the forms for the optimization problem \eqref{eq:kpenfun} corresponding to just the first part of the algorithm.
For simplicity, let
\[
\bom \Psi_k(\Q_k)=  \frac{1}{n_k} \sum_{i=1}^{n_k}  u_k(\x_{ki}^\hop \Q_k^{-1}\x_{ki}) \x_{ki}\x_{ki}^\hop.
\]
Also, map $\lambda \ge 0$ to $\beta = 1/(1+\lambda) \in (0,1]$. Note that $\be$ is a regularization parameter given in the formulation \eqref{eq:kpenfun2}. 
\begin{itemize}%[leftmargin=*]
\item {\bf{ Ellipticity distance $\dE(\A,\B)$}}:
\begin{align}  
\Q_k &\leftarrow   \beta \bom \Psi_k(\Q_k) + (1-\beta)
% \frac{1}{n_k} \sum_{i=1}^\ndim  u_k(\x_{ik}^\hop \Q_k^{-1}\x_{ik}) \x_{ik}\x_{ik}^\hop +  
\frac{ p\Q }{\Tr( \Q_k^{-1}\Q)} ,   \label{eq:FP1_dE} \\
\Q &\leftarrow  \left( \sum_{k=1}^\kdim \pi_k  \frac{p \Q_k^{-1} }{\Tr(\Q_k^{-1}\Q)} \right)^{-1}. \label{eq:FP2_dE}
%\Q_E(\eta_1,\dots,\eta_K),
\end{align} 
Note that in this case the first equation is not sensitive to the scaling of $\Q$, as we expected from the scaling invariance properties of the penalty $\dE$.

\item {\bf KL-distance $\dI(\A,\B)$}:
\begin{align} 
 \Q_k &\leftarrow \beta \bom  \Psi_k(\Q_k) %\frac{ \beta_k }{n_k}\sum_{k=1}^\ndim   u_k(\x_{ik}^\hop \Q_k^{-1}\x_{ik}) \x_{ik}\x_{ik}^\hop 
 +  (1- \beta) \Q, \label{kl:fixedp} \\
\Q &\leftarrow  \left( \sum_{k=1}^\kdim \pi_k \Q_k^{-1}   \right)^{-1}. \label{kl:fixedp2}%\Q_I(\eta_1,\dots,\eta_K).
\end{align}
It is shown by Theorems 1 and 2 in \cite{ollila2014regularized} that if the loss function $\rho_k$ is bounded below, then for any fixed $\Q$ the
FP algorithm \eqref{kl:fixedp} always converges to a unique solution.

\item {\bf{Riemannian distance $\dR(\A,\B)$}}:
\begin{align}
\Q_k &\leftarrow \left[ \I - 2 \lambda  \log(\M\M_k^{-1}) \right]^{-1}\bom  \Psi_k(\Q_k),  \label{r:fixedp}\ \\
%\\ &\hspace{3.3 cm}+ 2 \lambda \pi_k  \log(\M\M_k^{-1})\M_k
%\Q_k^{\frac{1}{2}}\log(\Q_k^{-\frac{1}{2}}\Q\Q_k^{\frac{1}{2}})\Q_k^{-\frac{1}{2}}, 
%\nonumber \\
\Q &\leftarrow \Q_{\rm R}(\bom \pi),
\end{align}
where $\Q_{\rm R}(\bom \pi)$ is the solution  to \eqref{eq:karcher}
\end{itemize}

\begin{remark} 
Note that iterative algorithms for KL and ellipticity metrics provide simple FP algorithms, but the Riemannian metric does not admit a simple FP equation for the joint center $\M$ update, but rather requires more complex schemes; see e.g. \cite{zhang2013majorization} for an appropriate iterative algorithm. In addition, note that the last step requires solving $\Q_{{\rm R}}(\bom \pi)$ as a solution to \eqref{eq:karcher} which is computationally demanding task. Also the updates for $\M_k$, $k=1,\ldots,\kdim$ in  \eqref{r:fixedp} are computationally more demanding than the updates  \eqref{kl:fixedp} and \eqref{eq:FP1_dE} corresponding to the other penalties. Therefore, since Riemannian distance requires a more specialized algorithm, we do not consider the Riemannian penalty further and exclude it from our simulation studies. 
\end{remark} 

%subsection{Uniqueness and convergence} 

\subsection{Tyler's loss function and the ellipticity penalty} \label{sec:tyl}

In this section we treat Tyler's loss function $\rho_k(t)=p \log t,$ for $k=1,\ldots,\kdim$, in more detail. 
For this case the $M$-estimation loss function  \eqref{eq:Lk} is scale invariant, i.e., $\mathcal L_k( c\Q_k)= \mathcal L_k(\Q_k)$,
Hence we will mainly consider the scale invariant distance $\dE(\A,\B)$ when using Tyler's loss function. Here, 
the fixed-point iteration in \eqref{eq:FP1_dE} becomes 
\begin{align} 
 \Q_k &\leftarrow \beta \frac{p}{n_k}\sum_{k=1}^\ndim \frac{\x_{ik}\x_{ik}^\hop}{\x_{ik}^\hop\Q_k^{-1}\x_{ik}}+ (1- \beta)\frac{p}{\Tr( \Q_k^{-1}\Q)}\Q, \label{tyl:fixedp}
\end{align}
with the joint center update being the same as in \eqref{eq:FP2_dE}. The resulting estimators $\Q_1, \ldots, \Q_K$ and $\Q$ are well defined only up to their
shape. That is, $\{\Q_1, \ldots, \Q_K, \Q\}$ is a solution if and only if $\{\sigma_1\Q_1, \ldots, \sigma_K\Q_K, \sigma\Q\}$ is a solution for any
positive $\sigma_1, \ldots, \sigma_K$ and $\sigma$. 

Curiously, if we choose a solution to \eqref{tyl:fixedp} and \eqref{eq:FP2_dE} for which $\Tr( \Q_k^{-1}\Q) = p$, then we see that this also
gives the solution to \eqref{kl:fixedp} and \eqref{kl:fixedp2}, i.e.\ when using Tyler's loss function with KL-penalization. As noted previously, for fixed $\Q$,
a unique solution to \eqref{kl:fixedp} always exists whenever the loss function $\rho_k$ is bounded below. Tyler's loss function, though, is 
not bounded below, and additional condition are needed to ensure existence. In particular, as a corollary to Theorem 4 in \cite{ollila2014regularized}
we have the following result. 

\begin{theorem} \label{t:T} For Tyler's loss function $\rho_k(t)=p \log t$ and $0 \leqslant \beta=1/(1+\lambda) < 1$,  
a necessary condition for program \eqref{eq:penfun} to have a non-singular minimum is that for each of the $k=1,\ldots, K$ group samples the inequality
% For any subspace $\mV$ of $\C^{\pdim}$, $1 \le dim(\mV) < \pdim$, the inequality 
%$\frac{ \# \{ \z_i \in \mV \}}{n} < \frac{dim(\mV)}{\pdim \be}$ holds. 
\beq \label{eq:condA}
P_{n_k,k}(\mathcal V)  =  \frac{ \# \{ \z_{ik} \in \mathcal V \}}{n_k}  <  \frac{\mathrm{dim}(\mV)}{p\beta}
\eeq 
holds for any subspace $\mV$  of $\mathbb{R}^{\pdim}$. Furthermore, if we replace the '$<$' with '$\le$' in Condition \eqref{eq:condA}, then
this becomes a sufficient condition for ensuring, for fixed $\Q$, that \eqref{tyl:fixedp} admit unique solutions for $\Q_k$ up to a scale.
\end{theorem}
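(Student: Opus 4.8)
The plan is to derive Theorem~\ref{t:T} as a corollary of Theorem~4 in \cite{ollila2014regularized}, which treats the single-sample regularized Tyler estimator with a fixed shrinkage target. The key observation, already flagged in the text preceding the statement, is that for Tyler's loss the KL-penalized update \eqref{kl:fixedp}--\eqref{kl:fixedp2} and the ellipticity-penalized update \eqref{tyl:fixedp}--\eqref{eq:FP2_dE} coincide once one selects the representative of the scale-equivalence class with $\Tr(\Q_k^{-1}\Q)=p$; since both the loss and the ellipticity penalty are scale invariant, nothing is lost by working with the KL form. So the first step is to fix the center $\Q$ (legitimate because \eqref{eq:penfun} is solved blockwise, and for any admissible configuration the $k$-th block problem is exactly $\min_{\Q_k}\{\beta\mathcal L_{\mathrm T,k}(\Q_k)+(1-\beta)\dI(\Q_k,\Q)\}$ after rescaling $\Q$ so that $\Tr(\Q_k^{-1}\Q)=p$), and recognize this as precisely the regularized Tyler objective of \cite{ollila2014regularized} with target $\Q$ and shrinkage weight $1-\beta$.

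Second, I would transcribe the existence/uniqueness dichotomy from Theorem~4 of \cite{ollila2014regularized} into the present notation. That theorem states (in the single-sample setting with $n$ points $\z_1,\dots,\z_n$, target $\bo T$, and regularization weight related to $\beta$) that the regularized Tyler fixed-point map has a solution, unique up to scale, under a condition of the form: for every proper subspace $\mV\subsetneq\R^p$, the empirical fraction of data points lying in $\mV$ is at most $\dim(\mV)/(p\beta)$; and that the strict inequality is necessary while the non-strict inequality is sufficient. Applying this to each group sample $\Z_k=\{\z_{1k},\dots,\z_{n_kk}\}$ separately (legitimate because, with $\Q$ fixed, the minimization decouples across $k$) yields exactly \eqref{eq:condA} as a necessary condition for a non-singular minimizer of each block, hence for \eqref{eq:penfun}, and the $\le$-version as the sufficient condition for each block's fixed-point iteration \eqref{tyl:fixedp} to have a unique solution up to scale.

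Third, a small bookkeeping step is required to translate the scalar constants: the text's parametrization uses $\beta=1/(1+\lambda)$ and writes the penalized objective as $\beta\mathcal L_k+(1-\beta)d$, whereas \cite{ollila2014regularized} may use $\lambda$, or a weight $\alpha$, directly; one must check that the quantity appearing in the denominator of \eqref{eq:condA} is indeed $p\beta$ and not, say, $p(1-\beta)$ or $p/(1+\lambda)$ with a different normalization. This amounts to matching the coefficient of the rank-one data terms versus the penalty term in the fixed-point equation — in \eqref{tyl:fixedp} the data part carries weight $\beta\cdot p$ (through $\beta\frac{p}{n_k}\sum\cdots$) and the penalty part weight $1-\beta$ — and confirming this reproduces the subspace-mass threshold $\dim(\mV)/(p\beta)$. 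I would also note the boundary case $\beta=1$ is excluded (the unregularized Tyler estimator needs $n_k>p$ and the genericity condition $P_{n_k,k}(\mV)<\dim(\mV)/p$), which is consistent with $0\le\beta<1$ in the hypothesis.

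The main obstacle is not the logic of the reduction but ensuring that the reduction is airtight: specifically, that a \emph{non-singular minimizer of the full joint problem} \eqref{eq:penfun} really does force each block subproblem (with $\Q$ frozen at the optimum) to have a non-singular minimizer, so that the necessity direction genuinely descends to each group. This needs the observation that at any joint optimum the center $\Q$ is itself positive definite — which follows from \eqref{eq:Imean}/\eqref{eq:Emean} since a harmonic-type mean of positive definite matrices is positive definite provided each $\Q_k$ is — and then that the KL (or ellipticity) penalty with positive definite target, together with the restriction $\Tr(\Q_k^{-1}\Q)=p$, exactly reproduces the hypotheses under which Theorem~4 of \cite{ollila2014regularized} applies. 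Handling the scale-equivalence class carefully (choosing consistent representatives across all $k$ and for $\Q$) is the one place where a sloppy argument could go wrong, so I would state explicitly that ``solution'' throughout means ``solution up to the group-wise and center-wise positive scalings,'' matching the caveat already made after \eqref{tyl:fixedp}.
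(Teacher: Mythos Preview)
Your approach is correct and matches the paper's: the paper gives no detailed proof of Theorem~\ref{t:T}, stating only that it is a corollary of Theorem~4 in \cite{ollila2014regularized}, after having noted (just above the statement) the identification of the ellipticity-penalized and KL-penalized Tyler problems via the normalization $\Tr(\Q_k^{-1}\Q)=p$. Your plan fills in precisely the details the paper leaves implicit---freezing $\Q$, decoupling into $K$ single-sample regularized Tyler problems, and reading off the subspace-mass condition groupwise---so there is nothing to add.
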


%\begin{remark}
 Here, $P_{n_k,k}$ is the empirical measure for the $k$-th group sample. Condition \eqref{eq:condA} 
implies that when the data is in general position for each group sample, we need $n_k > p \beta $.
%Hence, unless $\lambda$ is quite large (or, equivalently, $\beta$ is rather small), this condition in does not apply in the 
%insufficient sample support (ISS) case (i.e., when $n_k < \pdim$).  
%\end{remark} 

%{\red{TO DO: necessary conditions....} }

%\centerline{\red{Revise by DT up to this point.}}

\section{Cross validation procedure} \label{sec:CV}

Let us  describe a simple cross validation (CV) procedure that can be utilized for penalty parameter selection  $\be \in (0,1)$. 
Recall that the objective function %(or the negative log-likelihood) 
given all the data for the parameters $\M_1, \ldots, \M_K$ is
\[
 \mathcal{L}(\M_1, \ldots,\M_K) = \sum_{k=1}^\kdim \Bigg \{ \sum_{i=1}^{n_k} \rho_k(\x_{ki}^\top \M_k^{-1}\x_{ki}) - n_k  \log|\M_k^{-1}|  \Bigg \}.
 \] 
Partition each data set $\Z_k=\{ \x_{k1},\ldots, \x_{k \ndim_k}\}$ into $\qdim$ separate sets of approximately similar size (or exactly equal size when $\mathrm{mod}(\ndim_k,\qdim)=0$), i.e.,   let 
$
I_{k1} \cup I_{k2} \cup \cdots  \cup I_{k\qdim} = \{1,\ldots,\ndim_k\} \equiv [n_k]
$
denote the indices of $\qdim$ data folds of the $k$th data set. 
Common choises are $\qdim=5, 10$  or $\qdim=n_k$, which is known as {\it leave-one-out cross validation}.  
When  we leave $q$th fold out from the $k$th data set $\Z_k$, we obtain a reduced data set, denoted by 
data set $\Z_{-q,k}$, that does not include the observations  $\{\x_{kq}\}$, $q  \in  [ n_k ] \setminus I_{kq}$, in the $q$th fold.  
Cross validation scheme then  proceeds as follows: 
%The {\paino negative-log-likelihood loss CV minimization procedure} for penalty parameter selection proceeds as follows: 

\begin{enumerate} 

\item {\bf for}  $\beta \in [\beta]$ ($=$ a grid of $\beta$ values in $(0,1)$) and $q \in \{1,\ldots,\qdim\}$ {\bf do} 
\begin{itemize}
\item Compute $\big\{ \hat \M_k(\be,q) \big\}_{k=1}^\kdim$ based on  the data sets $\big\{  \Z_{-q,k} \big\}_{k=1}^\kdim$. 
%(each has $q$th fold left out) and the penalty parameter $\be$. 
 
\item {\paino CV fit} for $\be$ is computed over the  $q$th folds that were left out: 
\beq \label{eq:CV} 
\mathrm{CV}(\be,q) =   \sum_{k=1}^\kdim  \bigg \{   \sum_{ \tilde q  \in I_{kq} } \rho_k \big( \x_{k \tilde q}^\top \big[\hat \M_k(\be,q) \big]^{-1} \x_{k \tilde q} \big)   -  (\#I _{kq} ) \cdot  \log \big| \hat \M_k(\be,q)^{-1} \big|  \bigg \}
\eeq 
where  $ \# I _{kq} $ denotes the cardinality of set $I_{kq}$.   
\end{itemize} 
\item[] {\bf end}

\item Compute the average CV fit:  $\mathrm{CV}(\beta)= (1/\qdim) \sum_{q=1}^\qdim \mbox{CV}(\beta,q) $, $\forall \be \in [\be]$. 

\item Select $\hat \be_{\mathrm{CV}} =\arg \min_{\be \in [\be]} \mathrm{CV}(\be)$.
\item Compute  $\big\{ \hat \M_k(\be) \big\}_{k=1}^\kdim$ based on the entire data sets $\{\Z_k\}_{k=1}^\kdim$ for $\be=\hat \be_{\mathrm{CV}}$.

\end{enumerate} 

It is easy to imagine a variant of this approach in which definition of CV fit  is tuned towards a measure that arises from application perspective. 
For example, in discriminant analysis  described in  Section~\ref{sec:RDA}  one may wish to   
replace the CV fit measure in \eqref{eq:CV} by the classification error rate over the $q$th folds. This approach, however,  gave essentially same results, and hence the CV scheme described above is used also in this setting due to its simplicity.

\section{Regularized discriminant analysis (RDA)}\label{sec:RDA} 

The classic Fisher's QDA is based on the assumption that each class contain a sample of i.i.d. random vectors from the $\pdim$-variate Gaussian distribution with mean vector $\bom \mu_k$ and covariance matrix $\M_k$. For simplicity of exposition we assume that the class prior probabilities are equal. 
The QDA classification rule then assigns  a new measurement $\z$ to a group $\hat k$, where 
\beq \label{eq:QDA}
\hat k = \arg \min_{1 \leq k \leq \kdim}  \big\{ (\z-\bom \mu_k)^\top \M_k^{-1}( \z-\bom \mu_k) + \ln | \M_k | \big \}.
\eeq 
If  all class covariance matrices are presumed to be identical, i.e., $\M_k = \M$ for $k=1,\ldots,\kdim$,  
then the rule simplifies to 
% \beq \label{eq:LDA} 
$\hat k = \arg \min_{1 \leq k \leq \kdim}  (\z-\bom \mu_k)^\top \M^{-1}( \z-\bom \mu_k)$,  
%\eeq 
referred to as LDA rule hereafter. 
In general, QDA or LDA perform well when the class distributions are approximately normal and good estimates based on the training data can be obtained for the population parameters, mean vectors $\bmu_k$ and covariance matrices $\M_k$. % defining the $\pdim$-variate Gaussian distributions. 
These are usually estimated by the sample mean vectors, $\bar \z_k= (1/n_k) \sum_{i=1}^{n_k} \z_{ki}$, and sample covariance matrices
%\beq \label{eq:Sk} 
% \quad  
$\bo S_k = \frac{1}{n_k} \sum_{i=1}^{n_k} (\z_{ki} - \bar \z_k)  (\z_{ki} - \bar \z_k)^\top$    
%\eeq 
of the training samples $\Z_k$, $k=1,\ldots,\kdim$.  
QDA generally requires larger sample sizes than LDA and is often reported to be more sensitive to violations of the assumptions. QDA  also can not be applied if $\ndim_k \leq \pdim$ for any class and may exhibit poor performance when $\ndim_k$ is not considerably larger than the dimension $\pdim$. 
  LDA has the benefit of requiring only that $N= \sum_{k=1}^\kdim n_k > \pdim$. 
  %as its simplifying assumption of equal class covariance matrices reduces the amount of unknown variables significantly.  
 LDA can be  viewed as a form of regularized QDA that decreases the variance by using a pooled covariance matrix estimate, 
 %namely the weighted arithmetic mean of 
%the sample covariance matrices
%\beq \label{eq:Spooled}
$\bo S = \sum_{k=1}^\kdim  \pi_k \bo S_k$. 
%\eeq 
This can sometimes lead to superior performance compared to QDA 
especially in small-sample settings even if the population class covariance matrices are substantially different. 
%Indeed this is behaviour is observed in our simulation studies (see Table~3 and Table 4). 

%In his influential paper,  Friedman   proposed regularized discriminant  analysis  (RDA) 
% which offers a smart balance between QDA and LDA via shrinkage regularization.  
 The idea in RDA proposed in \cite{friedman1989regularized} is to  replace the unknown 
covariance matrices $\M_k$ in the QDA rule \eqref{eq:QDA} by shrinkage estimates $\bo S_k(\be)$ 
defined in \eqref{eq:Friedman},
%\beq \label{eq:Sbeta}
%\bo S_k(\be) =  \be \bo S_k + (1-\be) \bo  S, 
%\eeq 
where $\be \in [0,1]$ denotes the shrinkage regularization parameter. % and $\{\bo S_k\}_{k=1}^\kdim$ and $\bo S$ are the class sample covariance matrices and  the pooled sample covariance matrix estimate defined in \eqref{eq:Sk}  and \eqref{eq:Spooled}, respectively.  
 If $\be=1$, then one obtains the conventional empirical QDA rule and if $\be=0$, then one obtains the empirical LDA rule based on the pooled sample covariance matrix $\bo S$.  
A value $\be \in (0,1)$, between these two extremes, then offers a compromise between  LDA and QDA.
In our RDA approach we use the developed robust estimators $\hat \M_k(\be)$ instead of the shrinkage sample covariance matrices $\bo S_k(\be)$. 
The RDA rule becomes
\beq \label{eq:RDA}
\hat k = \arg \min_{1 \leq k \leq \kdim}  \big\{ (\z- \hat{\bom \mu}_k)^\top [\hat \M_k(\be)]^{-1}( \z-\hat{\bom \mu}_k) + \ln |\hat  \M_k(\be) | \big \}.
\eeq 
For robust loss functions, we employ the spatial median \cite{brown1983statistical} as  an estimate $\hat{\bom \mu}_k$ of location, whereas sample mean is used for Gaussian loss function.  Note that the shrinkage scatter matrix estimators $\hat \M(\be)$ are computed  using the centered data. 
% where the centering is w.r.t. the spatial median 
%for robust loss functions and w.r.t. the sample mean when the  Gaussian loss function is utilized. 

%\begin{remark}  When employing Tyler's loss function for shrinkage estimators, the obtained estimators are normalized so that  $\tr[ \hat \M_k(\be)]=\pdim$ holds  before applying the RDA rule \eqref{eq:RDA}. This seems adequate as Tyler's loss function  (as well as the employed distance $\dE$) is scale invariant  and hence one looses the scale information of the scatter in the process. 
%Scaling the class Tyler's  $M$-estimators to same size, then 
%improved the estimated misclassification rates considerably. On the other hand, normalization of other estimators 
%prior to computing the RDA rule did not improve the estimated misclassification risks. 
%\end{remark} 

% Friedman also introduced another penalty parameter 
%$\gamma$ and then further shrinked $\bo S_k(\be)$ towards the scaled identity $\I$ as 
%$\bo S_k(be,\gamma)=   \gamma \bo S_k(\be) + (1- \gamma) (\tr[ \bo S_k(\be) ]/p) \I$. 

\subsection{Simulation set-up} \label{sec:simul}

Population class conditional distributions are chosen to be $\pdim$-variate elliptical distributions and the total sample size is fixed to $N= \sum_{k=1}^\kdim n_k = 100$, the number of groups is $\kdim = 3, 5$   
%Fairly large set-up of different covariance structures were then considered for  the class populations. 
and  dimension varies from $\pdim = 10, 20, 30$.  %Our simulation set-up's closely follows that of \cite{friedman1989regularized} although not being exactly identical.
 For simplicity we use the same loss function $\rho=\rho_k$ for each $\kdim$ samples\footnote{Using different loss functions for different clusters can be advisable when some {\it a priori} information is available about the class distributions.}. The class distributions  follow  Gaussian distributions or $\pdim$-variate heavy-tailed $t_\nu$-distributions with  $\nu=2$  degrees of freedom.  
%When the class distributions  have $t_2$-distribution, one may speculate about the usefulness of 
%LDA/QDA (and hence RDA) rules in the first place since the covariance matrix does not exist (one needs $\nu >2$ for 2nd-order moments to exist). 
%Yet, as can be observed in the simulation studies,   LDA and RDA approaches do provide meaningful classification results. 

Both the Proposal 1 and Proposal 2 can be used to estimate the regularized class scatter matrices $\hat \M_k(\be)$ that are needed in RDA rule.   
%the one based on joint optimization of the center and the cluster scatter matrices as well as separate estimation of the center and %cluster scatter estimate via shrinkage regularization. 
We use notation Prop1$(\rho,d)$ and Prop2$(\rho,d)$,  where $\rho$ refers to the used loss function and $d$ to the used distance function. 
%FRDA  is a generalization of Friedman's RDA approach in which the class scatter matrix estimates are shrinked towards the pooled center estimate that is estimated separately whereas GRDA  estimates the quantities jointly by solving  geodesically convex optimization program \eqref{eq:penfun}. 
To identify the used loss function $\rho$, we use letters G, H, and T to refer to Gaussian loss 
$\rho_G$, Huber's loss $\rho_{\rm H}$ in \eqref{eq:huber_rho} and Tyler's loss function $\rho_{\rm T}$,  %(t)=p \log t$, 
respectively. Furthermore,  E and KL  indicate that ellipticity distance $\dE$ and KL-distance $\dI$, respectively,  are chosen as the distance function $d$. With the above notation,  Prop(G,KL) then refers to original 
RDA rule based on $\bo S_k(\be)$ in \eqref{eq:Friedman} and Prop2(T,E), for example,  indicates that  Tyler's loss function $\rho_{\rm T}$ and ellipticity distance $\dE$  are used when estimating the scatter matrices using Proposal 2.   
%{\bf RDA(H,E)} (resp. {\bf RDA(H,I)}) refers to using Huber's loss function and  ellipticity distance $\dE$ (resp. KL-distance $\dI$). 
For Huber's loss function we used $c^2 = F^{-1}_{\chi^2_\pdim}(0.9)$ as the tuning threshold $c$. 
%We compare the results of the above RDA rules with the  conventional empirical  LDA and  QDA rules. 

We compute the estimated misclassification risk as follows. 
The sample lengths follow multinomial distribution  $(n_1,\ldots,\ndim_\kdim) \sim \mathrm{Multin}(N, \bo p)$, where the class probabilities are $p_1=p_2=1/4$ and $p_3=1/2$ when $\kdim=3$ and  $p_1=p_2=p_3=1/6$, and $p_4=p_5=1/4$ when $\kdim=5$. 
%The number of observations for the first $\kdim-1$ classes were $n_k = \lfloor N/\kdim \rfloor $ for $k=1,\ldots,\kdim-1$ and 
%$n_\kdim = N - \sum_{k=1}^{\kdim-1} n_k$.    
Then random vectors were drawn from the appropriate class distributions. 
%The prior probability of each of the  classes was taken to be equal.
% so that the expected number of observations in each class was ... ; however, the actual number in any particular replication was itself a (multi- nomial) random variable.
Each such {\it training data} set was used to construct the estimated %linear, quadratic, and regularized 
discriminant rules. An additional {\it test data} set of same sample lengths $n_i$-s as the training data was  generated %in exactly the same manner as the training data set 
 and classified with the discriminant rules derived from the training set, thereby yielding an estimate of the misclassification risk. %The misclassification risk for RDA rule then depends on  $\be=1/(1+\lam) \in (0,1)$. 
For  RDA we report the  misclassification risk based on the best  value of shrinkage parameter $\be$. For each MC trial we compute the RDA rule \eqref{eq:RDA} for $\be$ in the grid 
$
[B]=[0.01,0.03,\ldots,0.49, 0.55, 0.60, \ldots, 0.9]  
$
%A dense grid is used for  small values of $\be$ ($< 0.5$)  since  sample sizes $n_i$ are comparable or smaller than dimension  $\pdim$, which means that large amount of regularization (small value of $\be$) is needed in most cases. Another reason for this choice is that simulation set-up's are such that LDA performs better and hence smaller values of $\be$ are generally favorable. 
 and the respective estimated misclassification risk. The best value $\be_0 \in [B]$ is chosen for each RDA approach  as the smallest value in the grid $[B]$ that produced the minimum misclassification risk.   Reported results are averages over 300 MC trials.  

We compare the performance of RDA approaches to conventional LDA and QDA rules as well as to  Oracle estimators. We use notation Oracle1 to refer to QDA rule in \eqref{eq:QDA} 
that uses both the true mean vectors $\bom \mu_k$ and true scatter matrices $\M_k$.  Oracle2 denotes QDA rule that uses the true scatter matrices $\M_k$, but estimated 
mean vectors $\hat {\bom \mu}_k$. For Gaussian samples, $\hat {\bom \mu}_k$ used in Oracle2 are the sample mean vectors and for $t_2$-distributed samples, they $\hat {\bom \mu}_k$ are the spatial medians of the samples.

%Note that our purpose here is  to separate the task of optimal penalty parameter selection from the classification accuracy achievable by RDA.  Therefore one needs to bear 
%in mind that the reported figures for RDA are optimal and perhaps not reachable exactly when $\be$ is chosen using an estimation procedure such as cross validation. 
% However, later, in Section~\ref{sec:data},  we describe CV procedures that can be used in our general framework and apply them  for penalty parameter selection of  RDA rules %in the analysis real-world data sets.  

\subsection{Simulation results}

\begin{table}[!t]
\vspace{-0.5cm}
\caption{ Average (\%)  test misclassification errors for unequal spherical covariance matrices ($\M_k=k \I$) for Gaussian (upper table) and $t_2$-distributed (lower table) clusters.  The quantities in subscript inside the parantheses are the standard deviations.} \label{tab:caseC}
\begin{center}
{\small
\begin{tabular}{|  l | c |  c |  c |  c |  c |  c |}
\hline
  % \multicolumn{7}{| c|}{ {\bf average (percentage)  test misclassification errors }} \\ \hline
    &\multicolumn{3}{| c|}{$\kdim=3$} & \multicolumn{3}{| c|}{$\kdim=5$} \\ \hline 
  method 		&   $p=10$  		&  $p=20$      		&  $p=30$ 		&   $p=10$   		&  $p=20$      		&  $p=30$   \\ \hline 
Oracle1             & 8.8$_{(2.6)}$ 	& 6.2$_{(2.3)}$		& 4.6$_{(1.9)}$	 	&  9.4$_{(2.8)}$		& 7.7$_{(2.8)}$		& 6.2$_{(2.3)}$ \\ 
Oracle2		& 9.8$_{(3.1)}$		& 7.6$_{(2.6)}$		& 6.0$_{(2.3)}$		&  11.3$_{(2.9)}$	& 10.1$_{(3.3)}$	& 9.2$_{(2.9)}$\\
QDA           	& 19.9$_{(4.4)}$	& $-$	  		& $-$			&   $-$ 			&  $-$ 			&  $-$ 	\\
LDA           	& 17.1$_{(3.8)}$	& 20.5$_{(4.3)}$	& 24.0$_{(4.9)}$ 	& 18.8$_{(3.8)}$	& 24.2$_{(4.7)}$	& 29.0$_{(5.0)}$\\
Prop1(G,KL)	& 12.2$_{(3.1)}$	& 14.6$_{(3.5)}$	& 17.9$_{(4.3)}$	& 15.4$_{(3.4)}$	& 20.5$_{(4.1)}$ 	& 25.8$_{(4.8)}$\\
Prop1(H,KL)	& 12.4$_{(3.2)}$	& 14.6$_{(3.5)}$	& 17.7$_{(4.1)}$	& 15.4$_{(3.3)}$	& 20.3$_{(4.1)}$ 	& 25.5$_{(4.8)}$\\
Prop1(T,E)	& 10.9$_{(3.1)}$ 	& 12.1$_{(3.3)}$	& 16.5$_{(3.9)}$ 	& 13.5$_{(3.4)}$	&17.1$_{(4.3)}$ 	& 23.9$_{(5.1)}$\\ 
Prop2(G,E)    	& 10.5$_{(3.0)}$ 	& 11.5$_{(3.3)}$	& 15.9$_{(3.8)}$	& 12.9$_{(3.4)}$	&16.5$_{(4.0)}$ 	& 22.7$_{(4.8)}$ \\ 
Prop2(T,E)  	& 10.9$_{(3.1)}$ 	&12.1$_{(3.3)}$		& 16.5$_{(3.9)}$	& 13.5$_{(3.4)}$	&17.1$_{(4.3)}$ 	& 23.9$_{(5.1)}$\\
Prop2(H,E)	& 10.5$_{(3.0)}$ 	&  11.6$_{(3.3)}$	& 15.7$_{(3.8)}$	& 12.9$_{(3.3)}$	&16.5$_{(4.1)}$ 	& 22.6$_{(4.8)}$\\ 
Prop2(H,KL)	&  12.3$_{(3.2)}$ 	&  14.8$_{(3.6)}$	& 18.0$_{(4.1)}$	& 15.2$_{(3.4)}$	& 20.1$_{(4.2)}$ 	& 25.4$_{(4.7)}$ \\ \hline 

%%%%%%%%%% T-DISTRIBUTION %%%%%%%%%%%
Oracle1               & 15.7$_{(3.8)}$ 	& 18.2$_{(3.9)}$	& 21.1$_{(4.0)}$ 	& 20.8$_{(4.1)}$ 	& 24.5$_{(4.2)}$	& 27.8$_{(4.6)}$ \\ 
Oracle2		& 16.2$_{(3.5)}$	& 19.1$_{(4.2)}$     	& 21.9$_{(4.1)}$ 	& 21.7$_{(4.3)}$	& 25.8$_{(4.1)}$ 	& 29.1$_{(4.6)}$\\
QDA           	& 26.9$_{(5.2)}$	& $-$	 		& $-$			&$-$				& $-$			&  $-$		 \\
LDA           	& 21.8$_{(4.9)}$	& 25.3$_{(5.3)}$	& 27.7$_{(5.3)}$	& 28.6$_{(5.6)}$	& 32.9$_{(5.6)}$ 	& 36.2$_{(5.4)}$ \\
Prop1(G,KL)	& 19.7$_{(4.8)}$	&22.7$_{(5.2)}$		& 24.7$_{(5.1)}$	& 27.2$_{(5.7)}$ 	& 31.0$_{(5.3)}$ 	& 33.8$_{(5.4)}$ \\
Prop1(H,KL)     & 15.5$_{(3.7)}$	&17.9$_{(4.0)}$		& 20.3$_{(4.1)}$ 	& 21.0$_{(4.1)}$	& 24.6$_{(4.5)}$	& 28.2$_{(4.6)}$ \\
Prop1(T,E)	& 16.8$_{(4.0)}$	&20.4$_{(4.3)}$		& 23.4$_{(4.7)}$	& 23.7$_{(4.5)}$ 	& 29.6$_{(5.0)}$	& 34.0$_{(5.3)}$ \\ 
Prop2(G,E)    	& 22.3$_{(5.9)}$	&24.3$_{(5.1)}$		& 25.9$_{(4.8)}$	& 28.1$_{(5.4)}$	& 32.5$_{(5.4)}$ 	& 35.4$_{(5.1)}$\\
Prop2(T,E)  	& 16.8$_{(4.0)}$	&20.4$_{(4.4)}$		& 23.5$_{(4.8)}$	& 23.7$_{(4.5)}$	& 29.7$_{(5.0)}$	& 34.1$_{(5.3)}$ \\
Prop2(H,E)	& 16.6$_{(3.9)}$	&20.2$_{(4.4)}$		& 23.6$_{(4.6)}$	& 23.1$_{(4.5)}$	& 29.1$_{(4.7)}$	& 33.8$_{(5.3)}$ \\
Prop2(H,KL)	& 15.5$_{(3.7)}$	&17.9$_{(4.0)}$		& 20.5$_{(4.1)}$	& 21.0$_{(4.1)}$	& 24.6$_{(4.4)}$	& 28.2$_{(4.5)}$ \\  \hline 
  \end{tabular}
}
\end{center} 
\end{table}

%This situation should favor RDA even more than the previous example because, unlike the previous one, here LDA is biased. 
We consider the case of unequal spherical covariance matrices, where the scatter matrix for the $k$th class is $\M_k= k \I$  for $k=1,\ldots,\kdim$. 
This setting is thus somewhat more favourable to QDA, but due to small sample sizes, the performance of QDA does not exceed that of 
LDA as is shown in Table~\ref{tab:caseC}, which summarizes the simulation results for both the Gaussian and $t_2$ distributions of the classes.  
The symmetry center $\bom \mu_1$ of the first class was the origin and for the remaining classes  $\bom \mu_k$ were taken to have norm equal to $\delta_k=\| \bmu_k\| = 3+k$ in  orthogonal directions  for Gaussian classes and $\delta_k=\| \bmu_k\| = 4+k$ for $t_2$-distributed classes ($k=2,\ldots,\kdim$). 
In the Gaussian case,  Prop2(G,E) and Prop2(H,E)  are offering consistently the best performance, also outperforming  Friedman's original RDA rule, Prop1(G,KL). For example, 
when $\pdim=20$, Prop2(G,E)  offers 3\% improvement in error rate compared to Prop1(G,KL).  This illustrates the benefits of choosing the correct penalty (and hence the estimate of joint center covariance matrix): Prop1(G,KL) and Prop2(G,E) are both using the optimal Gaussian loss function, but can have  4\% (e.g., case $\pdim=10$, $\kdim=5)$ difference in the misclassification rate in favor of  Prop2(G,E.  For Gaussian class distributions, the scale invariant penalty $\dE$ offers the best performance.  
%he performance differences beteen GRDA(H,E) and GRDA(H,I) attest to the validity of this insight. 
In $t_2$-case, the results illustrate that robust RDA approaches % Prop1(H,KL) and Prop1(T,E) as well as Prop2(T,E) and FRDA(H,I) 
  provide  significantly  better misclassification rates  compared to  non-robust RDA approaches using the Gaussian loss function. 
  % so Prop1(G,$\rho$), $\rho \in \{KL,E\}$ in all considered cases.   
  For example, the best performing robust RDA rules, Prop1(H,KL) and Prop2(H,KL) offer consistently 4--6\% improvements to Friedman's Prop1(G,KL). 
It is somewhat suprising that for $t_2$-distributed samples, KL-distance is generally performing better than the ellipticity distance.   
%We also observe that GRDA(H,E) which utilizes Huber's loss and $\dE$ penalty looses its performance as $\pdim$ increases.
 Among RDA  approaches, Prop2(G,E) has the worst performance when the class distributions follow the heavy-tailed $t_2$-distribution.

\begin{table}[!t]
\vspace{-0.5cm}
\caption{ Average (\%)  test misclassification errors for identical spherical ($\M_k=\I$) covariance matrices in Gaussian (upper table) and $t_2$-distributed (lower table) samples.  The quantities in subscript inside the parantheses are the standard deviations.  %of the respective quantities.
} \label{tab:caseA}
\begin{center}
{\small
\begin{tabular}{|  l | c |  c |  c |  c |  c |  c |}
\hline
%   \multicolumn{7}{| c|}{ {\bf average (percentage)  test misclassification errors }} \\ \hline
    &\multicolumn{3}{| c|}{$\kdim=3$} & \multicolumn{3}{| c|}{$\kdim=5$} \\ \hline 
  method 		&   $p=10$   		   	&  $p=20$      		&  $p=30$ 		&   $p=10$  		&  $p=20$      		&  $p=30$   \\ \hline
Oracle1             & 8.9$_{(2.9)}$		& 9.2$_{(3.1)}$		&  8.8$_{(2.8)}$		&  10.9$_{(3.2)}$	& 10.9$_{(3.2)}$	&  10.9$_{(3.0)}$\\
Oracle2		& 9.9$_{(3.1)}$		& 10.9$_{(3.3)}$	& 11.2$_{(3.2)}$	& 12.9$_{(3.2)}$	&14.3$_{(3.9)}$		&  15.5$_{(3.6)}$ \\
QDA           	& 18.1$_{(4.1)}$ 	&  $-$			& $-$			&  $-$			&  $-$			&  $-$		 \\ 
LDA           	& 11.3$_{(3.0)}$	& 14.1$_{(3.8)}$	& 16.9$_{(4.2)}$	& 14.6$_{(3.6)}$	& 18.5$_{(4.3)}$	& 22.9$_{(4.7)}$ \\
Prop1(G,KL)	& 10.3$_{(2.9)}$	& 13.0$_{(3.6)}$	& 15.4$_{(3.9)}$	& 13.6$_{(3.4)}$	& 17.4$_{(4.1)}$	&21.7$_{(4.5)}$ \\
Prop1(H,I)        & 10.4$_{(3.0)}$	& 13.0$_{(3.6)}$	& 15.4$_{(4.0)}$	& 13.7$_{(3.4)}$	& 17.5$_{(4.1)}$	&21.8$_{(4.5)}$ \\
Prop1(T,E)	& 10.8$_{(3.1)}$	& 13.4$_{(3.7)}$	& 15.7$_{(3.8)}$	& 14.5$_{(3.4)}$	& 18.3$_{(4.4)}$	&22.9$_{(4.9)}$ \\
Prop2(G,E)    	& 10.3$_{(3.0)}$	& 12.9$_{(3.7)}$	& 15.3$_{(3.7)}$	& 13.9$_{(3.4)}$	& 17.6$_{(4.1)}$	& 22.1$_{(4.7)}$ \\
Prop2(T,E)  	& 10.9$_{(3.1)}$	& 13.4$_{(3.7)}$	& 15.6$_{(3.8)}$	& 14.5$_{(3.5)}$	& 18.2$_{(4.4)}$	&22.9$_{(4.8)}$ \\
Prop2(H,E)	& 10.4$_{(3.0)}$	& 13.0$_{(3.7)}$	& 15.3$_{(3.8)}$	& 14.0$_{(3.4)}$	& 17.7$_{(4.2)}$	& 22.1$_{(4.8)}$\\
Prop2(H,I)	& 10.4$_{(3.0)}$	& 13.0$_{(3.6)}$	& 15.6$_{(4.0)}$	& 13.7$_{(3.4)}$	& 17.5$_{(4.1)}$	&21.8$_{(4.6)}$ \\ \hline
%%%%%%%%%%%%%%%%%%
%%    T - DISTRIBUTION:         %%
%%%%%%%%%%%%%%%%%&
Oracle1               & 12.3$_{(3.2)}$	&12.5$_{(3.5)}$		& 12.0$_{(3.2)}$	& 15.4$_{(3.5)}$ 	& 15.4$_{(3.4)}$	& 15.5$_{(3.3)}$\\
Oracle2		&12.7$_{(3.1)}$		&13.3$_{(3.6)}$		&13.5$_{(3.4)}$		& 16.5$_{(3.7)}$	& 17.3$_{(3.5)}$	& 18.1$_{(3.8)}$\\
QDA           	& 24.4$_{(5.1)}$	& $-$			& $-$			&  $-$			&  $-$			&  $-$		 \\ 
LDA           	& 16.2$_{(4.1)}$	&19.2$_{(4.5)}$		& 21.0$_{(4.7)}$	& 22.1$_{(4.7)}$ 	& 25.2$_{(5.0)}$ 	& 28.1$_{(4.9)}$ \\
Prop1(G,KL)    	& 14.7$_{(4.0)}$	&17.6$_{(4.4)}$		& 19.2$_{(4.3)}$	& 20.7$_{(4.5)}$ 	& 23.7$_{(4.6)}$ 	& 26.6$_{(4.8)}$ \\
Prop1(H,KL)    	& 12.7$_{(3.3)}$	&14.6$_{(3.8)}$		& 16.7$_{(3.8)}$	& 16.9$_{(3.8)}$	& 19.5$_{(3.7)}$ 	& 22.7$_{(4.1)}$ \\
Prop1(T,E)    	& 14.9$_{(4.2)}$	&18.2$_{(5.2)}$		& 21.0$_{(5.7)}$	&20.8$_{(4.7)}$ 	& 26.0$_{(5.3)}$ 	& 31.0$_{(6.1)}$ \\
Prop2(G,E)   	& 17.5$_{(4.9)}$	& 20.1$_{(5.4)}$	& 22.5$_{(5.0)}$	& 23.6$_{(5.2)}$	& 28.4$_{(5.5)}$ 	& 32.1$_{(5.6)}$ \\
Prop2(T,E)  	& 14.9$_{(4.2)}$	& 18.3$_{(5.2)}$	& 21.1$_{(5.7)}$	& 20.8$_{(4.7)}$	& 26.1$_{(5.3)}$ 	& 31.1$_{(6.2)}$ \\
Prop2(H,E)	& 14.5$_{(4.0)}$	& 17.5$_{(4.9)}$	& 20.4$_{(5.2)}$	& 19.9$_{(4.5)}$	& 24.9$_{(4.8)}$ 	& 29.5$_{(6.1)}$ \\
Prop2(H,KL)	& 12.7$_{(3.3)}$	& 14.7$_{(3.7)}$	&16.8$_{(3.8)}$		& 16.9$_{(3.8)}$	& 19.5$_{(3.7)}$ 	& 22.7$_{(4.1)}$ \\  \hline
 		   \end{tabular}
}
\end{center} 
\end{table}

We then consider the case of equal spherical covariance matrices  $\M_k=\I$ for $k=1,\ldots,\kdim$. In this case, one expects that KL-distance is better choise over ellipticity distance. 
 This set-up favors LDA over QDA due to equality of covariance matrices.   The true symmetry center $\bom \mu_1$ of the first class was the origin and for the remaining classes the mean vector $\bom \mu_k$ were taken to have norm equal to $\delta$ in  orthogonal directions. 
For Gaussian class distributions, we set $\delta=3$ and for heavy-tailed $t_2$-distributions, we set $\delta=4$. 
%Since the covariance matrices are equal, we expect that KL-penalty $\dI$  might be favourable to $\dE$ since the scatter matrices 
%are directly comparable,  that is, one may presume that scale invariance (property D5) of the penalty function is not an essential requirement in this scenario. 
Table~\ref{tab:caseA} gives the estimated misclassification risk for both class distributions.  
When the class distributions are standard normal distributions,  all RDA approaches provide uniformly lower misclassification errors than LDA/QDA, but now the differences between all RDA approaches are insignificant so it is not possible to declare a winner.  In general, one can say that all RDA approaches are performing equally well. 
% as observed differences can be due to sampling variations. 
 %For example, Friedman's FRDA(G,I)  does not always perform the best but GRDA(H,I) based on Huber's  loss and $\dI$ penalty is performing either better or comparably with it. 
 In $t_2$-case, the numbers  illustrate that robust RDA approaches that are based on KL-distance provide consistently significantly  better misclassification risks (about 2--5\% improvements) to Prop1(G,KL).  
 % It can be observed that FRDA(H,I) and GRDA(H,I) are performing equally well here. 
Prop2(H,E)   is not offering better performance than Prop1(G,KL) despite the robustness of the used loss function. 
 This again illustrates  the importance of choosing the right penalty: for equal class covariance matrices (and heavy-tailed distributions), $\dI$ penalty  seems  more appropriate choice than $\dE$.  This observation is also supported by comparing the performance of Prop2(G,E) to  Prop1(G,KL) which both are based on Gaussian loss function, but different distance function.  Among robust RDA approaches, Prop1(H,I) and Prop2(H,I) are performing the best. It should be noted that 
for $\pdim=10$, they offer  Oracle performance as their error rates are close to Oracle2 rule which uses the true covariance matrices.

% Again note that FRDA(G,I) and  GRDA(G,E) are both both  based on  Gaussian loss function.

%It is also instructive to look at  the average oracle regularization parameter values $\be$ in Table~\ref{tab:caseB}. These figures 
% illustrate the expected behaviour of RDA: a higher degree of regularization (large $\lambda$, so  small $\be$) is used when the dimension w.r.t. to sample size increases.  
%For $p=10$, the optimal value of $\be$ is almost twice as large as it is for $\pdim=30$.  

%\subsection{Unequal spherical covariance matrices} 

\subsection{Data example} \label{sec:data}  
%\subsection{Iris data} 

For illustrative purposes, we enclose the paper with a simple example of applying RDA on thw well-known Fisher's IRIS data \cite{fisher1936use} 
which has $K=3$  samples, each having $n_k= 50$  $p=4$-variate observations. 
%,anderson1936species} 
%which consists of measurements  in centimeters of  ($p=4$) variables sepal length and width and petal length and width, respectively, for $n_k= 50$ flowers from each of $K=3$ species (setosa, versicolor, and virginica). 
We partition the original ($3 \times 50$) dataset into a training ($3 \times T$) and a validation ($3 \times V$) subsets ($T+ V=50$). 
The different T/V paritionings used were 30/20, 25/25, 15/35 and 10/40. 
To demonstrate the robustness of our techniques over the standard Gaussian tools, we replaced two measurements in each training group by outliers with relatively high random amplitudes generated as $\zeta (1,  1, 1, 1)^\top$, where $\zeta$ was generated from  $Unif(0,1024)$ for each random T/V splits of 
the datasets. The training data set is used to estimate the regularized class covariance matrices and forming the  RDA rule using
5-fold  CV procedure for penalty parameter selection. 
%Thewere used to choose the regularization parameter $\beta$.
% The cross-validation approach described in Appendix worked in our specific setting as follows. 
%the training subset was used to optimally choose the value of $\beta^*(\rho,d)$-s for different loss functions $\rho$ and distance function $d$ described  $5$-f%old  cross-validation method. 
% We used two different techniques to calculate the optimal $\beta^*(\rho,d)$-s: 1) based on the negative-log-likelihood loss minimization and 2) based on discrete misclassification loss minimization. Namely, for each choice of $\rho$ and $d$ we took that value of $\beta$ which minimized the corresponding target \eqref{eq:CV} or \eqref{eq:CV2}. 
%Given the RDA rules based on the training data,  
We then calculated the average misclassification errors  on  the validation subset and 
% using  standard scoring criterion \eqref{eq:RDA3}. 
the results, collected in Table~\ref{tab:irisdata}, 
% The cross-validation procedure described above treats a specific split of the data into the training and test subsets.
are averaged over $100$ random T/V partitions of the original dataset.  Also results using LDA and QDA rules are reported. 
%The other sections of the table demonstrate similar results for $25/25$, $15/30$ and $10/40$ partitions of the original data into the training and test sets.
 These figures clearly illustrate that robust RDA rules outperform the conventional LDA and QDA rules as well as  Friedman's RDA rule,  Prop1(G,KL). 
 Furthermore, note that RDA rules based on Proposition 2 are giving slightly better results compared with RDA rules based on Proposition~1. This is most evident in the case of 10/40 partitioning, which is also the case in which regularization approaches are most useful due to relatively small sample size  ($n_k=10)$. For 10/40 partitioning case,  Prop2(H,KL) gives 4.2\%  error rate whereas Prop1(H,KL)  attains 6.3\% error rate. In constrast, the conventional non-robust LDA and Friedman's Prop1(G,KL) yield 11.5\% and 9.3\% error rates, 
respectively.

\begin{table}[!t]
\vspace{-0.5cm}
\caption{ Average (\%)  validation misclassification errors for IRIS data. 
Here $30/20,\; 25/25,\; 15/35$ and $10/40$ refer to T/V random splits of  $n_k=50$ measurements in each 
 class into the training and validation subsets. The training data was used for group covariance estimation and forming the RDA rules using $5$-fold  cross validation.  
Results are averages of 100 random T/V splits of the data sets.}\label{tab:irisdata}
%The quantities are the averages over $100$ runs fractions misclassified data.} \label{tab:irisdata}
\begin{center}
{\small
\begin{tabular}{|  l | c |  c | c | c |c |  c | c | c |}
\hline
  % \multicolumn{7}{| c|}{ {\bf average (percentage)  test misclassification errors }} \\ \hline
%\multicolumn{5}{| c|}{misclassification errors}  \\ \hline 
& 30/20 & 25/25 & 15/35 & 10/40 \\ \hline 
% method   & Likelihood Loss & Discrete Loss & Likelihood Loss & Classification Loss \\ \hline  		
% method		& CV1  & CV1 & CV2 & CV1 & CV2 & CV1 & CV2 \\ \hline   		
LDA           	& 7.0		& 6.8  	& 9.6 	& 11.5\\
QDA           	& 5.0		& 4.7		&6.3 		& 8.3 \\
Prop1(G,KL)	& 5.1      	& 4.9  	& 7.1  	& 9.3  \\
Prop1(T,E)	& 2.7      	& 3.6  	& 3.9		& 4.0  \\
Prop1(H,E)	& 2.9      	& 3.1  	& 3.9  	& 6.3  \\
Prop1(H,KL)	& 2.8      	& 3.3  	& 3.9  	& 6.4  \\
%GRDA(G,E)     	& 5.4      	& 4.7  	& 6.0  	& 7.7  \\
Prop2(T,E)  	& 2.8      	& 3.5  	& 3.7  	& 5.8  \\
Prop2(H,E)	& 2.8      	& 3.1  	& 3.7  	& 4.7  \\
Prop2(H,KL)	& 2.9      	& 3.4  	& 3.7  	& 5.8  \\ \hline
  \end{tabular}
}
\end{center} 
\end{table}

\section{Conclusions} \label{sec:concl} 

In this paper, we have formulated a joint penalized ML (or $M$) estimation approach for estimating the unknown scatter matrices of $\kdim>1$ samples and a joint center. 
The penalty function is based on a distance that enforces similarity. We considered three different jointly $g$-convex penalties, namely Riemannian, Ellipticity,  and KL-distance
in our formulations.

We illustrated the usefullness of our estimators in RDA setting.  In this connection,  we would like to stress that discriminant analysis is only one application where 
the developed approach can be used. We expect that our approach and framework can find uses  in many other applications such as 
radar signal processing or graphical models, where similar ideas has been used; See \cite{besson2008covariance,danaher2014joint}, for example. 
There are still room for improvements in the RDA approach. For example, we did not explore using different loss functions for different classes or using different penalties for different classes. Also, the distance (penalty) function can be different for each class. Such choices can be useful in some applications.

We did not use Frobenius  distance which is based on classical Euclidean geometry where as our approach is based on $g$-convexity which treats $\PDH(\pdim)$ as a differentiable Riemannian manifold with geodesic path \eqref{eq:geopath}. Let us point out that there are other distance functions $d(\A,\B)$ that coud be used such as 
%possess the necessary properties D1 and D2 such as 
S-divergence \cite{sra2011positive}: 
\[
d_{{\rm S}}(\A,\B) =  \log \Big| \frac{\A + \B}{2} \Big| -  \frac 1 2 \log | \A \B |. 
\]
$S$-divergence obviously satistifies D1 and it was shown in \cite{sra2011positive} that $d_{{\rm S}}$ is jointly $g$-convex, i.e., verifies D2.   
Moreover, S-divergence possesses properties similar to that of geodesic distance $d_{{\rm R}}(\cdot,\cdot)$, such as symmetry property D3  (and also affine invariance D4), but has the benefit of being easier to compute. Indeed the induced mean \eqref{mean_def} is a solution to a fixed point equation
\[
\Q  = \left( \sum_{k=1}^\kdim \pi_k \bigg(\frac{ \Q + \Q_k}{2} \bigg)^{-1}   \right)^{-1}
\]
and thus can be interpreted as weighted harmonic mean of pairwise averages. Despite of the above representation for the mean for fixed $\M_1,\ldots,\M_\kdim$, joint estimation of the scatter matrices result into rather complex estimating equations.  Therefore we omitted the use of  this distance function in our framework.

\appendix
\section{Proofs}
\noindent {\bf Proof of Proposition~1}  Properties D1, D4 and D5 are obvious. For D2, we show in Lemma \ref{tr_conv} below that $\log \Tr(\A^{-1}\B)$ is jointly $g$-convex. Next, we note that $\log | \A^{-1} \B | = \log |\A^{-1}|  + \log | \B|$, and that the log-determinant function is a $g$-linear function, i.e.\ $\pm \log | \A|$ is $g$-convex. Hence 
$\log |\A^{-1} \B|$ is jointly $g$-convex, and so D2 holds.  Since \eqref{mean_def} is a sum of $g$-convex functions, the necessary and sufficient condition for  $\M$ to be 
the solution to \eqref{mean_def} is the vanishing of the gradient, $\Nabla_{\M} \sum_{i=1}^\kdim \pi_k \, \dE(\Q_k,\Q) = \bo 0$, the solution of which is easily found to 
be \eqref{eq:Emean}. 

\begin{lemma}
\label{tr_conv}
$\log\Tr(\A^{-1}\B)$ is a jointly strictly $g$-convex function.
\end{lemma}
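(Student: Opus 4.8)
The plan is to establish joint $g$-convexity by first removing the dependence on one of the two arguments through an affine reduction, and then bounding the trace term coordinatewise by Jensen's inequality and Hölder's inequality; strict $g$-convexity then follows from the equality cases in those two steps. Fix a joint geodesic $(\A_t,\B_t)$, $t\in[0,1]$, with $\A_t$ and $\B_t$ as in \eqref{eq:geopath}, and recall that $\Tr((\C\A\C^\top)^{-1}\C\B\C^\top)=\Tr(\A^{-1}\B)$ for every nonsingular $\C$ and that the geodesic \eqref{eq:geopath} is congruence-equivariant. First I would apply a congruence $\C$ with $\C\A_0\C^\top=\I$ and $\C\A_1\C^\top=\D:=\mathrm{diag}(d_1,\dots,d_\pdim)$, $d_i>0$, which changes neither side of the inequality to be proved and reduces the $\A$-geodesic to $\A_t=\D^{t}$. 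Then $\Tr(\A_t^{-1}\B_t)=\sum_{i=1}^{\pdim}d_i^{-t}(\B_t)_{ii}$, $\Tr(\A_0^{-1}\B_0)=\sum_i(\B_0)_{ii}$, and $\Tr(\A_1^{-1}\B_1)=\sum_i d_i^{-1}(\B_1)_{ii}$.

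Next I would prove two scalar inequalities. For any fixed vector $\bo v$ one has $\bo v^\top\B_t\bo v\le(\bo v^\top\B_0\bo v)^{1-t}(\bo v^\top\B_1\bo v)^{t}$: writing $\bo v^\top\B_t\bo v=\bo w^\top\mathbf{Z}^{t}\bo w$ with $\bo w=\B_0^{1/2}\bo v$ and $\mathbf{Z}=\B_0^{-1/2}\B_1\B_0^{-1/2}$ and passing to an eigenbasis of $\mathbf{Z}$, this reduces to $\sum_j c_j\zeta_j^{t}\le(\sum_j c_j)^{1-t}(\sum_j c_j\zeta_j)^{t}$ with $c_j\ge0$ and $\zeta_j>0$, which is Jensen's inequality for the concave map $x\mapsto x^{t}$. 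Taking $\bo v=\bo e_i$ gives $(\B_t)_{ii}\le(\B_0)_{ii}^{1-t}(\B_1)_{ii}^{t}$, and then by Hölder's inequality
\[
\Tr(\A_t^{-1}\B_t)=\sum_{i=1}^{\pdim}d_i^{-t}(\B_t)_{ii}\le\sum_{i=1}^{\pdim}\big((\B_0)_{ii}\big)^{1-t}\big(d_i^{-1}(\B_1)_{ii}\big)^{t}\le\Big(\sum_{i=1}^{\pdim}(\B_0)_{ii}\Big)^{1-t}\Big(\sum_{i=1}^{\pdim}d_i^{-1}(\B_1)_{ii}\Big)^{t}.
\]
Taking logarithms yields $\log\Tr(\A_t^{-1}\B_t)\le(1-t)\log\Tr(\A_0^{-1}\B_0)+t\log\Tr(\A_1^{-1}\B_1)$, i.e.\ joint $g$-convexity; equivalently, $t\mapsto\Tr(\A_t^{-1}\B_t)$ is a sum of the log-convex scalar functions $t\mapsto d_i^{-t}(\B_t)_{ii}$, hence log-convex. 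For strictness I would invoke the strict concavity of $x\mapsto x^{t}$ on $(0,\infty)$ (which makes each Jensen step an equality only when $\mathbf{Z}$ is scalar on the span of $\B_0^{1/2}\bo e_i$) together with the equality condition in Hölder's inequality (proportionality of $(\B_0)_{ii}$ and $d_i^{-1}(\B_1)_{ii}$); these identify exactly the geodesics along which $\Tr(\A_t^{-1}\B_t)$ is log-affine, and off that family the inequality is strict.

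The step I expect to need the most care is the \emph{joint} aspect of the statement: separate $g$-convexity in $\A$ for fixed $\B$ and in $\B$ for fixed $\A$ does not give joint $g$-convexity, and the naive attempt to split $\Tr(\A^{-1}\B)$ into terms $\bo v^\top\A^{-1}\bo v$ and $\bo v^\top\B\bo v$ breaks down because the eigenvectors of $\B_t$ move with $t$. The key device is to break the symmetry between the two arguments by diagonalizing the $\A$-geodesic and working in that fixed eigenbasis, so that the $\A$-dependence enters only through the positive scalars $d_i^{-t}$ while the full $\B$-dependence is controlled coordinate by coordinate; verifying that the resulting equality analysis genuinely yields strict $g$-convexity (rather than merely $g$-convexity) is the only remaining delicate point.
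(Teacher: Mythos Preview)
Your argument for joint $g$-convexity is correct and takes a genuinely different route from the paper. The paper writes out both geodesics via the eigendecompositions $\A_0^{-1/2}\A_1\A_0^{-1/2}=\U_A\D_A\U_A^\top$ and $\B_0^{-1/2}\B_1\B_0^{-1/2}=\U_B\D_B\U_B^\top$, and manipulates $\log\Tr(\A_t^{-1}\B_t)$ directly into the form $\log\sum_{i,j}a_{ij}\exp\{t(-\log(\D_A)_{ii}+\log(\D_B)_{jj})\}$ with $a_{ij}=(\U_A^\top\A_0^{-1/2}\B_0^{1/2}\U_B)_{ij}^2\ge0$; convexity in $t$ is then read off from the convexity of log-sum-exp. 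Your approach instead breaks the symmetry by a congruence that diagonalises the entire $\A$-geodesic, and controls the $\B$-geodesic coordinatewise via Jensen (for $(\B_t)_{ii}\le(\B_0)_{ii}^{1-t}(\B_1)_{ii}^t$) and H\"older. Both are short; the paper's proof is a one-line structural identification, while yours is more hands-on but makes the equality cases transparent.

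There is, however, a genuine gap in the strictness part---and the paper's proof shares it. Log-sum-exp is convex but \emph{not} strictly convex (it is affine under a common shift of all exponents), so the paper's closing sentence does not actually deliver strict $g$-convexity. In your argument, the equality analysis you sketch in fact exhibits non-trivial geodesics on which the inequality is an equality: take $\A_1=c\A_0$ and $\B_1=c'\B_0$ with $c,c'>0$; after your reduction $\D=c\I$ and $\mathbf Z=c'\I$, both the Jensen and the H\"older steps are equalities, and $\log\Tr(\A_t^{-1}\B_t)=t\log(c'/c)+\log\Tr(\A_0^{-1}\B_0)$ is affine in $t$. The same conclusion is immediate from the invariance $\Tr((c\A)^{-1}(c\B))=\Tr(\A^{-1}\B)$. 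So the word ``strictly'' in the lemma cannot be salvaged as stated; the correct conclusion is joint $g$-convexity, which is all that is invoked in the proof of Proposition~1, and your argument establishes that cleanly. Your own caveat that ``verifying \ldots strict $g$-convexity \ldots is the only remaining delicate point'' is therefore well placed: that verification fails, not because your method is too weak, but because the strict claim is false.
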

\begin{proof}
The geodesic curves connecting $\A_0$ with $\A_1$ and $\B_0$ with $\B_1$ on the Riemannian PSD manifold are given by:
\begin{eqnarray*}
\A_t&=&\A_0^{\frac{1}{2}}\(\A_0^{-\frac{1}{2}}\A_1\A_0^{-\frac{1}{2}}\)^t\A_0^{\frac{1}{2}}
=\A_0^{\frac{1}{2}}\U_A\D_A^t\U_A^T\A_0^{\frac{1}{2}},\\ 
\B_r&=&\B_0^{\frac{1}{2}}\(\B_0^{-\frac{1}{2}}\B_1\A_0^{-\frac{1}{2}}\)^r\B_0^{\frac{1}{2}}
=\B_0^{\frac{1}{2}}\U_B\D_B^r\U_B^T\B_0^{\frac{1}{2}},
\end{eqnarray*}
where the right hand sides are obtain from using the eigenvalue decompositions 
%\begin{eqnarray*}
\[ \A_0^{-\frac{1}{2}}\A_1\A_0^{-\frac{1}{2}} = \U_A\D_A\U_A^T, \quad \mbox{and} \quad
\B_0^{-\frac{1}{2}}\B_1\B_0^{-\frac{1}{2}} = \U_B\D_B\U_B^T. \]
%\end{eqnarray*}
This gives
\begin{align*}
\log\Tr{\A_t^{-1}\B_r}
&= \log\Tr{    
 \A_0^{-\frac{1}{2}}\U_A\D_A^{-t}\U_A^T\A_0^{-\frac{1}{2}}
 \B_0^{\frac{1}{2}}\U_B\D_B^r\U_B^T\B_0^{\frac{1}{2}}}\\
&=\log\Tr{    
 \A_0^{-\frac{1}{2}}\U_A\D_A^{-\frac{t}{2}}\D_A^{-\frac{t}{2}}\U_A^T\A_0^{-\frac{1}{2}}
 \B_0^{\frac{1}{2}}\U_B\D_B^{\frac{r}{2}}\D_B^{\frac{r}{2}}\U_B^T\B_0^{\frac{1}{2}}}\\
&=\log\Tr{    
 \D_A^{-\frac{t}{2}}\U_A^T\A_0^{-\frac{1}{2}}
 \B_0^{\frac{1}{2}}\U_B\D_B^{\frac{r}{2}}\cdot\D_B^{\frac{r}{2}}\U_B^T\B_0^{\frac{1}{2}}\A_0^{-\frac{1}{2}}\U_A\D_A^{-\frac{t}{2}}}\\
 &=\log\Tr{\C\C^T},  ~ \mbox{where} ~ \C=\D_A^{-\frac{t}{2}}\U_A^T\A_0^{-\frac{1}{2}}
 \B_0^{\frac{1}{2}}\U_B\D_B^{\frac{r}{2}} \\
 &=\log\sum_{i,j}\C_{i,j}^2 = \log\sum_{i,j}\(\U_A^T\A_0^{-\frac{1}{2}}
 \B_0^{\frac{1}{2}}\U_B\)_{i,j}^2\(\D_A\)_{ii}^{-t}\(\D_B\)_{jj}^{r}\\
 &=\log\sum_{i,j}\(\U_A^T\A_0^{-\frac{1}{2}}
 \B_0^{\frac{1}{2}}\U_B\)_{i,j}^2e^{-t\log\(\(\D_A\)_{ii}\)+r\log\(\(\D_B\)_{jj}\)}.
\end{align*}
Since the log-sum-exp expression is strictly convex in $(t,r)$, the log-trace function is jointly stricly $g$-convex.
\end{proof}

\medskip

\noindent {\bf Proof of Proposition~2}  KL-divergence satisfies D1 and D4. By Lemma \ref{tr_conv}, $\Tr(\A^{-1}\B)$ is  jointly strictly $g$-convex, which implies $\dI(\A,\B)$ is jointly strictly $g$-convex, i.e., D2 holds.  Since \eqref{mean_def} is a sum of strictly $g$-convex functions, the unique minimizer is found by solving  
$\Nabla_{\M} \sum_{i=1}^\kdim   \pi_k \,  \dI(\Q_k,\Q) = \bo 0$, which gives \eqref{eq:Imean}.

\section*{References}

%\bibliography{../../IEEE/bibtex/IEEEabrv,../../mybib/STATabrv,ilya_bib}

\end{document}